\documentclass[copyright]{eptcs}

\usepackage[utf8]{inputenc}
\usepackage[english]{babel}
\usepackage{graphicx}
\usepackage{amsmath}
\usepackage{amssymb}
\usepackage{amsthm}
\usepackage{xcolor}
\usepackage{array}
\usepackage{pgf}
\usepackage{tikz}
\usepackage{tkz-graph}
\usetikzlibrary{automata,positioning}

\newtheorem{definition}{Definition}
\newtheorem{lemma}{Lemma}
\newtheorem{proposition}{Proposition}
\newtheorem{theorem}{Theorem}

\setlength{\parindent}{0pt}

\title{Boolean networks synchronism sensitivity and {\Large XOR} circulant
  networks convergence time\thanks{With the approval of the chairman of Automata
    2012, we include in this paper parts of another paper that has been
    submitted and is on the way to be published in a special issue of
    Theoretical Computer Science on discrete structures.}}

\author{
	Mathilde Noual
	\institute{Universit\'e de Lyon, \'ENS-Lyon, LIP, CNRS UMR 5668, 69007 Lyon, France}
	\institute{Institut rh\^one-alpin des syst\`emes complexes, IXXI, 69007 Lyon, France}
	\email{mathilde.noual@ens-lyon.fr}
	\and
	Damien Regnault
	\institute{Universit\'e d'\'Evry -- Val d'Essonne, IBISC, \'EA 4526, 91000 \'Evry, France}
	\email{damien.regnault@ibisc.univ-evry.fr}
	\and 
	Sylvain Sen\'e
	\institute{Universit\'e d'\'Evry -- Val d'Essonne, IBISC, \'EA 4526, 91000 \'Evry, France}
	\institute{Institut rh\^one-alpin des syst\`emes complexes, IXXI, 69007 Lyon, France}
	\email{sylvain.sene@ibisc.univ-evry.fr}	
}

\begin{document}

\maketitle

\begin{abstract}
	In this paper are presented first results of a theoretical study on the
	role of non-monotone interactions in Boolean automata networks. We
    propose to analyse the contribution of non-monotony to the diversity and
	complexity in their dynamical behaviours according to two axes. The
	first one consists in supporting the idea that non-monotony has a
	peculiar influence on the sensitivity to synchronism of such
	networks. It leads us to the second axis that presents preliminary
	results and builds an understanding of the dynamical behaviours, in
	particular concerning convergence speeds, of specific non-monotone
	Boolean automata networks called \textsc{xor} circulant
	networks.\\ \textbf{Keywords.} Boolean automata networks, synchronism
	sensitivity, \textsc{xor} circulant networks, convergence time.
\end{abstract}

\section{Introduction}

Boolean automata networks were first introduced by McCulloch and Pitts
in~\cite{McCulloch1943} and Kauffman in~\cite{Kauffman1969a}. These two works 
and others following these (see~\cite{Hopfield1982,Kauffman1971,Thomas1973}) 
highlighted the importance of embedding biological problematics in a context close to 
discrete mathematics and theoretical computer science.\medskip

In the lines of these studies, we propose in this paper to tackle the
question of the role of non-monotony in Boolean automata networks. Our interest
in this issue comes from the fact that non-monotony, although widely studied in
other contexts~\cite{Cull1971,Elspas1959,Huffman1956}, is missing from the
literature related to Boolean automata networks viewed as models of genetic
regulation networks.  Indeed, classically, the interaction structure of Boolean
models of genetic regulation networks are often represented by \emph{signed}
digraphs whose vertices represent genes, and arcs labelled by $+$ (resp. by $-$)
represent activations (resp.  inhibitions) of genes on each other. Thus, a gene
that tends to influence the expression of another gene is supposed to be either
one of its activators or one of its inhibitors, rather than both. More
precisely, it cannot act as an activator under certain circumstances and act as
an inhibitor under others. This interpretation of gene regulations leads to
define monotone Boolean automata networks as studied
in~\cite{Aracena2006,Cosnard1997,Goles1985,Mendoza1998,Remy2003}. Interesting
facts are that, from the general point of view, the global dynamical properties
of non-monotone networks have not yet been at the centre of studies in this
domain nor has the impact of non-monotone interactions yet been examined
\textit{per se}. It therefore seems pertinent to address questions about the
role of non-monotony on the dynamical characteristics of Boolean automata
networks. To go further, our recent theoretical developments have led us to
think that non-monotony may be at the origin of singular behaviours of these
networks. This gives additional significance to the issue of non-monotony from
both the perspectives of the theory of Boolean automata networks and of the
framework of genetic regulation networks. Thus, we present in this paper
the grounds of a larger study on non-monotony in networks by developing two
lines.  The first one consists in understanding the synchronism sensitivity of
networks. To do so, we highlight that networks can be synchronism sensitive at
different levels and shows that non-monotony is a central structural parameter
that helps to classify networks. Then, on the basis of the first line, we present
primary results on the dynamical properties (notably in terms of convergence
time) of a specific class of non-monotone networks called \textsc{xor} circulant
networks.\medskip

In Section~\ref{sec_def}, we provide definitions and notations of Boolean 
automata network theory that are used in the paper. Section~\ref{sec_synchro} 
gives details about a classification of such networks according to their synchronism 
sensitivity and show that non-monotony is a central parameter in this context. 
Then, Section~\ref{sec_xor} presents dynamical properties of \textsc{xor} circulant 
networks by exploiting their trajectories and their convergence time. Finally,
Section~\ref{sec_conclu} proposes perspectives arising from this work.

\section{Definitions and notations}
\label{sec_def}

A \emph{Boolean automata network} involves interacting elements. Any
of these elements has a state which equals $0$ or $1$. Then, we speak of
inactive and active elements respectively. Moreover, the state of each element
can change over time according to the states of other elements and to their
influence on it~\cite{Noual2011b,Robert1986}. This section is devoted to the
formalisation of the main definitions and notations used in the sequel.

\subsection{Network definition}
\label{sec_net_def}

A \emph{Boolean automata network} $N$ of size $n$ is composed of $n$
elements called \emph{automata} which are numbered from $0$ to $n-1$ such that
$V = \{0, \ldots, n-1\}$. Every automaton $i$ has a \emph{state} $x_i$ that
takes values in $\{0,1\}$. The \emph{time space is discrete} and equals
$\mathbb{N}$. The allocation of a value of $\{0,1\}$ to every automaton of $N$
is called a \emph{configuration} of $N$. It is represented by a vector $x =
(x_0, \ldots, x_{n-1}) \in \{0,1\}^n$. We also denote by $x(t)$ (resp. $x_i(t)$)
the configuration of $N$ (resp. the state of automaton $i$) at time step $t \in
\mathbb{N}$. The \emph{density} of a configuration $x$ is defined as $d(x) =
\frac{1}{n}\cdot|\{x_i\ |\ (i \in V) \land (x_i = 1)\}|$. Because we are
particularly concerned with switches of automata states starting in a given
configuration, we introduce the following notations:
\begin{multline}
	\label{eq_notations_basiques}
	\forall x = (x_0, \ldots, x_{n-1}) \in \{0,1\}^n,\\ 
	\forall i \in V = \{0, \ldots, n-1\},\ \overline{x}^i = (x_0, \ldots, x_{i-1}, \neg x_i, x_{i+1},
	\ldots, x_{n-1}) \quad \text{and} \quad \forall W\subseteq V,\ \overline{x}^{W\cup\{i\}}=
	\overline{\overline{x}^W}^i \text{.}
\end{multline}
Notice that $\overline{0}^i$ is the configuration in which $i\in V$ is the only
automaton that has state $1$ and $\overline{0}^W$ is the configuration in which
the automata in $W$ all have state $1$ contrary to automata in $V\setminus
W$. The interaction structure of $N$ is represented by a digraph $G = (V, A)$,
called the \emph{interaction graph} of $N$ that specifies what influences apply
to each automaton of $N$. In $G$, $V$ equals the set of automata of $N$ and $A
\subseteq V \times V$ is the interaction set. The precise nature of these
influences are given by the \emph{local
  transition functions} $f_i: \{0,1\}^n \to \{0,1\}$ which are associated to
each automaton and satisfy:
\begin{equation}
\label{eq_effectivite}
	\forall i, j \in V,\ (j,i) \in A \iff \exists x \in \{0,1\}^n,\ f_i(x) \neq 
	f_i(\overline{x}^j)\text{.}
\end{equation}
In other words, $(j, i)$ is an arc of $G$ if and only if $j$ effectively
influences $i$. This means that in some network configurations (but not
necessarily in all of them), the state of $j$ causes a change of states of
$i$. As a consequence, a Boolean automata network is entirely defined by the set
of local transition functions of its automata.  Figure~\ref{fig_struct}
illustrates a Boolean automata network of size $2$ by picturing the set of its
local transition functions and its underlying interaction graph.
\begin{figure}[t!]
	\centerline{\begin{tabular}{ccm{30mm}}
			$\left\{
				\begin{array}{l}
					f_0(x) = x_1\\
					f_1(x) = (\neg x_0 \land \neg x_1) \lor (x_0 \land x_1)\\
				\end{array}
			\right.$ &
			~~~~~~ & 
			\begin{tikzpicture}[shorten >=0.5pt,node distance=1.5cm,on grid]
				\node[circle, draw] (n_0) {$0$};
				\node[circle, draw] (n_1) [right=of n_0] {$1$};
				\path[->]
				(n_0) edge [bend left] node {} (n_1)
				(n_1) edge [bend left] node {} (n_0)
				edge [loop right] node {} ();
 			\end{tikzpicture}
		\end{tabular}}
 		\caption{A Boolean automata network of size $2$ with its set of local transition 
 			functions and its underlying interaction graph.}
 		\label{fig_struct}
\end{figure}
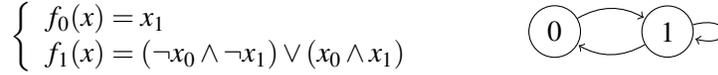

\subsection{Updating modes and transition graphs}
\label{sec_updating_modes}

The definition of a network does not determine its possible dynamical
behaviours. To do so, the way automata are updated over time has to be
specified. Here, we introduce the three distinct updating modes that are used in
this paper.\medskip

The most general standpoint is to consider every
possibility. Considering a network $N$ as a state transition system, each
configuration of $N$ is subjected to $2^{n-1}$ outgoing transitions, one for
each non-empty set of automata whose states can be updated. For any subset $ W
\neq \emptyset \subseteq V$, we define the update function $F_W: \{0,1\}^n \to
\{0,1\}^n$ such that:
\begin{equation*}
	\forall x \in \{0,1\}^n, \forall i \in V,\ F_{W}(x)_i = \begin{cases}
	f_i(x) & \text{if } i \in W \text{,}\\ 
	x_i & \text{otherwise.}
  \end{cases}
\end{equation*}
Thus, if we consider the most \emph{general updating mode}, the global network
behaviour is represented by the \emph{general transition graph} $\mathcal{G}_g =
(\{0,1\}^n, T_g)$ where $T_g = \{(x, F_W(x))\ |\ x \in \{0,1\}^n,\ W \neq
\emptyset\, \subseteq V\}$~\cite{Noual2010}. In $\mathcal{G}_g$, an arc is 
labelled by the list of subsets $W$ of automata that are updated in their 
corresponding transition $(x, F_W(x))$ such that each $F_W$ applied to $x$ gives
the same image (see the left panel of Figure~\ref{fig_updating}).\medskip
\begin{figure}[t!]
	\centerline{\begin{tabular}{c|c|c}
			\begin{tikzpicture}[shorten >=0.5pt,node distance=2.25cm,on grid]
				\node[rectangle, draw] (c0) {$\footnotesize 00$};
				\node[rectangle, draw] (c1) [right=of c0] {$\footnotesize 01$};
				\node[rectangle, draw] (c2) [below=of c0] {$\footnotesize 10$};
				\node[rectangle, draw, fill=black!20] (c3) [right=of c2] {$\footnotesize 11$};
				\tikzset{LabelStyle/.style = {fill=white,sloped}}
				\tikzset{LabelStyle/.style = {color=white, text=black}}
				\path[->]
				(c0) edge [bend left] node {\colorbox{white}{$\scriptstyle 1, \{0,1\}$}} (c1) 
				edge [loop above] node {\colorbox{white}{$\scriptstyle 0$}} ()
				(c1) edge node {\colorbox{white}{$\scriptstyle 1$}} (c0)
				edge node {\colorbox{white}{$\scriptstyle 0$}} (c3)
				edge node {\colorbox{white}{$\scriptstyle \{0,1\}$}} (c2)
				(c2) edge node {\colorbox{white}{$\scriptstyle 0, \{0,1\}$}} (c0) 
				edge [loop below] node {\colorbox{white}{$\scriptstyle 1$}} ()
				(c3) edge [loop below] node {\colorbox{white}{$\scriptstyle 0, 1, \{0,1\}$}} ();
 			\end{tikzpicture}~ &
			~~~\begin{tikzpicture}[shorten >=0.5pt,node distance=2.25cm,on grid]
				\node[rectangle, draw] (c0) {$\footnotesize 00$};
				\node[rectangle, draw] (c1) [right=of c0] {$\footnotesize 01$};
				\node[rectangle, draw] (c2) [below=of c0] {$\footnotesize 10$};
				\node[rectangle, draw, fill=black!20] (c3) [right=of c2] {$\footnotesize 11$};
				\tikzset{LabelStyle/.style = {fill=white,sloped}}
				\tikzset{LabelStyle/.style = {color=white, text=black}}
				\path[->]
				(c0) edge [bend left] node {\colorbox{white}{$\scriptstyle 1$}} (c1) 
				edge [loop above] node {\colorbox{white}{$\scriptstyle 0$}} ()
				(c1) edge node {\colorbox{white}{$\scriptstyle 1$}} (c0)
				edge node {\colorbox{white}{$\scriptstyle 0$}} (c3)
				(c2) edge node {\colorbox{white}{$\scriptstyle 0$}} (c0) 
				edge [loop below] node {\colorbox{white}{$\scriptstyle 1$}} ()
				(c3) edge [loop below] node {\colorbox{white}{$\scriptstyle 0, 1$}} ();
 			\end{tikzpicture}~~ &
			~~~\begin{tikzpicture}[shorten >=0.5pt,node distance=2.25cm,on grid]
				\node[rectangle, draw, color=white, fill=black!65] (c0) {$\footnotesize 00$};
				\node[rectangle, draw, color=white, fill=black!65] (c1) [right=of c0] {$\footnotesize 01$};
				\node[rectangle, draw, color=white, fill=black!65] (c2) [below=of c0] {$\footnotesize 10$};
				\node[rectangle, draw, fill=black!20] (c3) [right=of c2] {$\footnotesize 11$};
				\tikzset{LabelStyle/.style = {fill=white,sloped}}
				\tikzset{LabelStyle/.style = {color=white, text=black}}
				\path[->]
				(c0) edge node {\colorbox{white}{$\scriptstyle \{0,1\}$}} (c1) 
				edge [color=white, loop above] node {\colorbox{white}{\textcolor{white}{$\scriptstyle 0$}}} ()
				(c1) edge node {\colorbox{white}{$\scriptstyle \{0,1\}$}} (c2)
				(c2) edge node {\colorbox{white}{$\scriptstyle \{0,1\}$}} (c0) 
				edge [color=white, loop below] node {\colorbox{white}{\textcolor{white}{$\scriptstyle 1$}}} ()
				(c3) edge [loop below] node {\colorbox{white}{$\scriptstyle \{0,1\}$}} ();
 			\end{tikzpicture}
		\end{tabular}} 
 		\caption{(left) General, (centre) asynchronous and (right) parallel transition 
 			graphs of the Boolean automata network of Figure~\ref{fig_struct}.}
  		\label{fig_updating}
\end{figure}
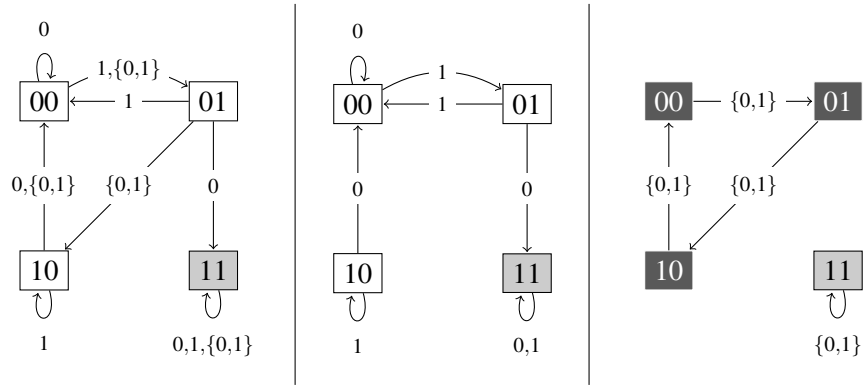

Transitions $(x, F_{i}(x))$ that involve the update of one automaton $i \in
V$ only are called \emph{asynchronous transitions}. Transitions $(x, F_W(x))$, 
$|W| > 1$, that involve the update of several are called \emph{synchronous transitions}.
The sub-graph $\mathcal{G}_a = (\{0,1\}^n, T_a)$ of $\mathcal{G}_g$ whose set of
arcs $T_a = \{(x, F_{\{i\}}(x))\ |\ x \in \{0,1\}^n,\ i \in V\}$ equals the set
of asynchronous transitions of the network is called the \emph{asynchronous 
transition graph}. Taking $\mathcal{G}_a$ as a reference transition graph allows to define
the \emph{asynchronous updating mode} according to which, in each configuration, 
only $n$ transitions are considered, one for each automaton that can be updated 
alone. This updating mode has been widely used in studies of Thomas and his 
co-workers in~\cite{Remy2008,Richard2007,Richard2004,Thomas1981}. An 
illustration of an asynchronous transition graph is given in 
Figure~\ref{fig_updating}~(centre).\medskip

The general and the asynchronous transition graphs are very large graphs. 
In some cases, to draw intuitions, it is interesting to restrict our attention to 
the transitions resulting from a specific deterministic updating schedule $u$. This 
distinct point of view is derived from the work of 
Robert~\cite{Robert1986,Robert1995} and has been adopted in various studies, 
see for instance~\cite{Aracena2009,Demongeot2010,Goles2010}. 
Section~\ref{sec_xor} focuses on a specific such deterministic mode, called the
\emph{parallel updating mode} $\pi$. It consists in updating all automata at once 
in each network configuration. The underlying global transition function is 
$F[\pi] = F_V$ so that $\forall i\in V,\ F[\pi](x)_i = f_i(x)$ and the network 
behaviour is considered to be described by the graph of $F[\pi]$, \emph{i.e.}, 
the transition graph $\mathcal{G}_{\pi} = (\{0,1\}^n, T_{\pi})$ where 
$T_{\pi} = \{(x, F[\pi](x))\ |\ x \in \{0,1\}^n\}$ (see the right panel of 
Figure~\ref{fig_updating}).

\subsection{Dynamical behaviours and non-monotony}
\label{sec_non-monotony}

Consider a Boolean automata network $N$ and an updating mode $u \in
\{g, a, \pi\}$ among those mentioned above. Let $\mathcal{G}_u$ be the
associated transition graph and $x \in \{0,1\}^n$ be a configuration of $N$. The
definitions that follow extend directly and naturally to more general updating
modes.\medskip

A path in $\mathcal{G}_u$ that starts in $x$ is a \emph{trajectory} of
$x$. In $\mathcal{G}_u$, strongly connected components that admit no outgoing
arcs, called terminal strongly connected components, are called the
\emph{attractors} of $N$. They correspond to the asymptotic behaviours of $N$.
Their sizes equal the number of configurations that they contain. The
configurations belonging to attractors of $N$ are called its \emph{recurrent
  configurations}. An attractor of size $1$ is called a \emph{stable
  configuration}. Other attractors are called \emph{stable oscillations}. In the
deterministic context of the parallel update schedule, stable configurations
correspond to fixed points of the global transition function $F[\pi]$ and stable
oscillations of size $p$ are rather called \emph{limit cycles} of \emph{period}
$p$. They correspond to oriented cycles in $\mathcal{G}_\pi$. These notions are
illustrated in Figure~\ref{fig_updating}. There, stable configurations are
represented in light grey and stable oscillations in dark grey. In particular,
this figure shows that the network defined in Figure~\ref{fig_struct} admits a
unique stable configuration, that is, configuration $11$, whatever the updating
mode chosen. This recalls that stable configurations are preserved unlike
sustained oscillations (see the limit cycle of period $3$ appearing when the
network is subjected to the parallel updating mode).\medskip

By analogy with continuous functions, the local transition function $f_i$ 
of an automaton $i \in V$ is said to be \emph{locally monotone} in $j \in V$ if, either:
\begin{equation*}
	\forall x = (x_0, \ldots, x_{n-1}) \in \{0,1\}^n,\ f_i(x_0, \ldots, x_{j-1}, 0, 
	x_{j+1}, \ldots, x_{n-1}) \leq f_i(x_0, \ldots, x_{j-1}, 1, x_{j+1}, \ldots, 
	x_{n-1})
\end{equation*}
or:
\begin{equation*}
	\forall x = (x_0, \ldots, x_{n-1}) \in \{0,1\}^n,\  f_i(x_0, \ldots, x_{j-1}, 0, 
	x_{j+1}, \ldots, x_{n-1}) \geq f_i(x_0, \ldots, x_{j-1}, 1, x_{j+1}, \ldots, 
	x_{n-1})\text{.}
\end{equation*}
In other terms, $f_i$ is locally monotone in $j$ if, in the conjunctive (or
disjunctive) normal form of $f_i(x)$, either only $x_j$ appears or only $\neg
x_j$ does.  The function $f_i$ is said to be locally monotone or simply monotone
if it is locally monotone in \emph{all} $j\in V$.  It is said to be \emph{non}
(\emph{locally}) \emph{monotone} otherwise. In this latter case, there is a $j
\in V$ such that in some configurations, the state of $i$ imitates that of $j$
and in some other configurations, on the contrary, the state of $i$ negates that
of $j$. A network is \emph{monotone} when all functions $f_i$, $i \in V$, are
monotone. Otherwise, if at least one local transition function is non-monotone,
the network is said to be \emph{non-monotone}. Note that we distinguish
\emph{totally non-monotone} networks (with only non-monotone local transition
functions) from \emph{partially non-monotone} networks (composed by at least one
local monotone transition function). As an example, the network of
Figure~\ref{fig_struct} is partially non-monotone.

\section{Synchronism sensitivity and non-monotony}
\label{sec_synchro}

The aim of this section is to focus on the concept of synchronism 
sensitivity of Boolean automata networks and highlight that non-monotony is a 
consistent structural parameter that has a significant role in this line.

\subsection{Synchronism sensitivity cases}
\label{sec_synchro_cases}

When Boolean automata networks are viewed in the framework of state
transition systems by means of the general and asynchronous updating modes,
questions about the influence of synchronism on the dynamical behaviours of
networks naturally arise. The notion of \emph{synchronism sensitivity} of a
network can then be informally described as the fact that its dynamical
behaviour  changes significantly when synchronism is taken into
account in the computation of its evolution. On the basis of the transition
graphs $\mathcal{G}_a$ and $\mathcal{G}_g$ (and more precisely on what can
change by building $\mathcal{G}_g$ from $\mathcal{G}_a$), we concentrate on
asymptotic dynamical behaviours (and specifically on recurrent configurations
rather than attractors). First, we describe the different cases that can
possibly occur when synchronous transitions are added to an asynchronous
transition graph.\medskip

Let $N$ be a Boolean automata network with its associated asynchronous
and general transition graphs $\mathcal{G}_a = (\{0,1\}^n, T_a)$ and
$\mathcal{G}_g = (\{0,1\}^n, T_g)$ and let $x, y \in \{0,1\}^n$ be two distinct
configurations of $N$.  We say that a synchronous transition from $(x, y) \in
T_g$ is \emph{sequentialisable} if there exists a sequence of asynchronous
transitions from $x$ to $y$, \emph{i.e.}, if there is a trajectory from $x$ to
$y$ in $\mathcal{G}_a$. It is obvious that if all synchronous transitions of
$\mathcal{G}_g$ are sequentialisable then, adding synchronism does not change
the asymptotic dynamical behaviour of $N$ and $N$ is then not synchronism
sensitive. Let us therefore restrict the study to the case where $\mathcal{G}_g$
contains a non-sequentialisable synchronous transition $(x, y)$. For any $z \in
\{0,1\}^n$, we let $\mathcal{A}_z$ (resp.  $\mathcal{A}_z^\star$) be the set of
attractors to which z leads or belongs in $\mathcal{G}_a$ (resp. in
$\mathcal{G}_g$). And we denote by $\mathcal{L} = \cup_{z \in \{0,1\}^n}
\mathcal{A}_z$ (resp. $\mathcal{L}^\star = \cup_{z \in \{0,1\}^n}
\mathcal{A}_z^\star$) the set of all attractors in $\mathcal{G}_a$ (resp. in
$\mathcal{G}_g$). With these notations, because of the existence of transition
$(x, y)$ in $\mathcal{G}_g$, any attractor that can be reached by $y$ can also
be by $x$ so $\mathcal{A}_y^\star \subseteq \mathcal{A}_x^\star$. On the
contrary, in $\mathcal{G}_a$, because there are no trajectories from $x$ to $y$
($(x, y)$ is non-sequentialisable), $\mathcal{A}_y \subsetneq \mathcal{A}_x$ is
impossible. Indeed, either $(i)$ $y$ is transient and the only attractors that
it can reach are those of $\mathcal{A}_y = \mathcal{A}_x$ that can be reached
from $x$, either $(ii)$ $y$ is transient and it can reach attractors in
$\mathcal{A}_y \setminus \mathcal{A}_x \neq \emptyset$ that cannot be reached
from $x$, or $(iii)$ $y$ is recurrent and since there are no trajectories from
$x$ to $y$, there also are no trajectories from $y$ to $x$ also is and $\mathcal{A}_x
\cap \mathcal{A}_y = \emptyset$. Notice that, in the two latter cases induce
$\mathcal{A}_y \nsubseteq \mathcal{A}_x$ and that $(i)$, $(ii)$ and $(iii)$
respectively yield cases $2$, $3$ and $4$ listed below. Thus, when the
non-sequentialisable synchronous transition $(x, y)$ is added to
$\mathcal{G}_a$, one of the only four possible cases listed below holds:
\begin{enumerate}
\item $x$ is transient in $\mathcal{G}_a$. Consequently, the set $\mathcal{L} = 
	\mathcal{L}^\star$ of all attractors is unchanged. All configurations $z \in 
	\{0,1\}^n$ that can reach $x$ in $\mathcal{G}_a$, including $x$, remain 
	transient but gain the possibility to reach attractors in $\mathcal{A}_y 
	\setminus \mathcal{A}_z$ (\emph{i.e.}, $\mathcal{A}_y = \mathcal{A}_y^\star$ 
	and $\mathcal{A}_x \subseteq \mathcal{A}_z \implies \mathcal{A}_z^\star = 
	\mathcal{A}_z \cup \mathcal{A}_y$).
\item $x$ is recurrent, $y$ is transient and $\mathcal{A}_y = \mathcal{A}_x$. 
	Consequently, all $z \in \{0,1\}^n$ on a trajectory from $y$ to $\mathcal{A}_x$, 
	including $y$, become recurrent and are included in $\mathcal{A}_x^\star$, 
	causing $\mathcal{A}_x$ to grow (to become $\mathcal{A}_x^\star$).
\item $x$ is recurrent, $y$ is transient and $\mathcal{A}_y \setminus \mathcal{A}_x 
	\neq \emptyset$. Then $x$ becomes transient causing $\mathcal{L}$ to loose 
	attractor $\mathcal{A}_x$ ($\mathcal{A}_x^\star = \mathcal{A}_y = 
	\mathcal{A}_y^\star$ and $\mathcal{L}^\star = \mathcal{L} \setminus \mathcal{A}_x$).
\item both $x$ and $y$ are recurrent in $\mathcal{G}_a$. Attractor $\mathcal{A}_x$ 
	``empties itself'' in $\mathcal{A}_y$ ($\forall z \in \mathcal{A}_x$, $z$ becomes 
	transient and such that $\mathcal{A}_z = \mathcal{A}_x \nsubseteq 
	\mathcal{A}_z^\star = \mathcal{A}_y^\star$) also causing $\mathcal{L}$ to loose attractor 
	$\mathcal{A}_x$ (to become $\mathcal{L}^\star$).
\end{enumerate}

\subsection{Synchronism sensitivity levels}
\label{sec_synchro_levels}

The four cases above suggest between three and four levels of sensitivity 
(see Definition~\ref{def_level} below)  that a Boolean automata network can have to the 
addition of synchronism (the relative importance of levels $1^\circ$ and $1^\bullet$ 
being disputable, they are deliberately not ordered). Cases $1$ and $2$ respectively 
yield levels $1^\circ$ and $1^\bullet$ and cases $3$ and $4$ both yield level $2$.
\begin{definition}
	\label{def_level}
	Let $N$ be a Boolean automata network. The \emph{synchronism sensitivity} of 
	$N$ can be of:
	\begin{itemize}
	\item \emph{level $0$}: $N$ is not sensitive at all. All its synchronous 
		transitions either act as shortcuts for asynchronous trajectories or, on the 
		contrary, add local, confluent deviations which increase the number of possible 
		steps in a trajectory without changing its outcome.
	\item \emph{level $1^\circ$}: $N$ is sensitive in the sense that the addition of 
		synchronism grants additional liberty in the evolutions of some transient 
		configurations that are made to reach a greater number of different attractors.
	\item \emph{level $1^\bullet$}: $N$ is sensitive in the sense that the addition of 
	synchronism causes some transient configurations to become recurrent and thus some 
	(necessarily unstable) attractors to grow.
	\item \emph{level $2$}: $N$ is sensitive in the sense that the addition of synchronism 
	destroys attractors.
	\end{itemize}
\end{definition}

As said before, because we focus here exclusively on recurrent
configurations, the only networks that we have to consider are of levels
$1^\bullet$ and $2$. However, our recent studies have shown that, contrary to
level $2$, level $1^\bullet$ comprises many networks. It is thus not
sufficiently discriminant and consequently not significant in our framework. So,
let us concentrate on  level $2$.

\subsection{Synchronism sensitive minimal networks}
\label{sec_synchro_result}

Focusing on synchronism sensitive Boolean automata networks of level $2$, 
our aim is now to  show what are the minimal networks (in terms of size) which are
sensitive to the addition of synchronism and how they  relate to non-monotony. 
Here, the motivation directly comes from systems and synthetic biology where the 
discovering of minimal genetic interaction patterns with singular dynamical properties 
(\emph{i.e.}, singular biological functionalities) seems central to improve our 
understanding of living organisms.\medskip

This leads us to the following proposition.
\begin{proposition}
	\label{prop_asynchronous-general} 
	The minimal Boolean automata networks that are synchronism sensitive of level $2$ 
	are totally non-monotone.
\end{proposition}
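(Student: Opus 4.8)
The plan is to determine the minimal size exactly — it will turn out to be $2$ — and then to classify the level-$2$ networks of that size. A network of size $1$ has only one non-empty subset of automata, so $\mathcal{G}_g=\mathcal{G}_a$ and such a network is not synchronism sensitive at all; hence the minimal level-$2$ networks have size at least $2$. It then remains to check that (a) some network of size $2$ is synchronism sensitive of level $2$, and (b) every such network of size $2$ is totally non-monotone. For (a) I would use the \textsc{xor} network on two automata, $f_0(x)=f_1(x)=x_0\oplus x_1$: a short computation shows that in $\mathcal{G}_a$ the configuration $00$ is a stable configuration and $\{01,10,11\}$ is a stable oscillation, while in $\mathcal{G}_g$ the synchronous transition $(11,F_V(11))=(11,00)$ is non-sequentialisable and makes $11$ — and therefore the whole oscillation — transient, so that attractor $\{01,10,11\}$ is destroyed and the network sits at level $2$. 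Since $x_0\oplus x_1$ is non-monotone in each variable, this witness is totally non-monotone and the minimal size is exactly $2$.

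For (b), let $N$ have size $2$ and be synchronism sensitive of level $2$, and let $(x,y)$ be a non-sequentialisable synchronous transition witnessing this. Since $n=2$, the only non-trivial update function is $F_V$, so $y=F_V(x)=(f_0(x),f_1(x))$; if $x$ and $y$ agreed on a coordinate the transition would be trivial or asynchronous, hence sequentialisable, so $y=\overline{x}^{\{0,1\}}$, which already forces $f_0(x)=\neg x_0$ and $f_1(x)=\neg x_1$. Level $2$ corresponds to cases $3$ and $4$ of Section~\ref{sec_synchro_cases}, both of which require $x$ to be recurrent in $\mathcal{G}_a$; since a stable configuration admits no non-trivial synchronous transition, $x$ lies in a stable oscillation. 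The two forced values say that both single updates move $x$, so $x\to\overline{x}^0$ and $x\to\overline{x}^1$ in $\mathcal{G}_a$ and the oscillation contains $\{x,\overline{x}^0,\overline{x}^1\}$; it cannot be all of $\{0,1\}^2$ (otherwise no attractor could ever be lost), so it equals $\{x,\overline{x}^0,\overline{x}^1\}$. Non-sequentialisability of $(x,\overline{x}^{\{0,1\}})$ then forces $f_1(\overline{x}^0)=x_1$ and $f_0(\overline{x}^1)=x_0$, and recurrence of $x$ — ruling out that $\overline{x}^0$ or $\overline{x}^1$ is a stable configuration, which would leave $x$ transient — forces $f_0(\overline{x}^0)=x_0$ and $f_1(\overline{x}^1)=x_1$. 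Finally, the only attractor besides this size-$3$ oscillation that $\{0,1\}^2$ can accommodate is a stable configuration at $\overline{x}^{\{0,1\}}$, so case $3$ (which needs $y$ transient together with a further attractor) is impossible and we are in case $4$: $\overline{x}^{\{0,1\}}$ is a stable configuration, giving $f_0(\overline{x}^{\{0,1\}})=\neg x_0$ and $f_1(\overline{x}^{\{0,1\}})=\neg x_1$. All eight entries of the truth tables of $f_0$ and $f_1$ are now determined; reading them off (say with $x=00$) yields $f_0=f_1=\neg(x_0\oplus x_1)$, and for a general corner $x$ one obtains $f_0,f_1\in\{x_0\oplus x_1,\ \neg(x_0\oplus x_1)\}$ — in every case both local functions are \textsc{xor}-type, hence non-monotone, so $N$ is totally non-monotone.

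The heart of the argument, and the place to be careful, is this chain of forced truth-table values in (b): one must check that it really exhausts every way a size-$2$ network can reach level $2$, the two delicate points being that $x$ cannot be transient in $\mathcal{G}_a$ (whence neither $\overline{x}^0$ nor $\overline{x}^1$ may be a stable configuration) and that case $3$ cannot occur, since two of the four configurations already form the oscillation and leave no room for a third asymptotic behaviour. Everything else is routine: the size-$1$ case is immediate and the verification of the witness is a direct computation on eight configurations. The outcome is that, up to the obvious symmetries, there is a single minimal level-$2$ network — the two-automaton \textsc{xor} network — which is precisely the kind of totally non-monotone, \textsc{xor}-type network whose convergence behaviour is studied in Section~\ref{sec_xor}.
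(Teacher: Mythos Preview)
Your proof is correct and follows essentially the same route as the paper's: rule out size $1$, then for size $2$ determine the shape the asynchronous transition graph is forced to have (the three-configuration stable oscillation $\{x,\overline{x}^0,\overline{x}^1\}$ together with the fixed point $\overline{x}^{\{0,1\}}$), and read off that $f_0,f_1\in\{x_0\oplus x_1,\ \neg(x_0\oplus x_1)\}$. The only presentational differences are that you split off an explicit existence witness (your part~(a)) and that you route the classification through the case analysis of Section~\ref{sec_synchro_cases}, explicitly eliminating case~3, whereas the paper argues directly from the required sub-graph; both arrive at the same four networks.
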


\begin{figure}[t!]
	\centerline{\hspace*{-6mm}\begin{tabular}{ccm{35.1mm}}
			$f_0, f_1 \in \{x \mapsto (x_0 \oplus x_1), x \mapsto \neg (x_0 \oplus x_1)\}$ &
			~~~~~~ & 
			\begin{tikzpicture}[shorten >=0.5pt,node distance=1.5cm,on grid]
				\node[circle, draw] (n_0) {$0$};
				\node[circle, draw] (n_1) [right=of n_0] {$1$};
				\path[->]
				(n_0) edge [bend left] node {} (n_1)
				edge [loop left] node {} ()
				(n_1) edge [bend left] node {} (n_0)
				edge [loop right] node {} ();
 			\end{tikzpicture}
		\end{tabular}}\medskip
 		\hrule\bigskip
 		
		\centerline{\begin{tabular}{c|c}
			\begin{tikzpicture}[shorten >=0.5pt,node distance=1.75cm,on grid]
				\node[rectangle, draw, fill=black!20] (c0) {$\footnotesize \overline{x}^{0,1}$};
				\node[rectangle, draw, color=white, fill=black!65] (c1) [right=of c0] {$\footnotesize \overline{x}^{0}$};
				\node[rectangle, draw, color=white, fill=black!65] (c2) [below=of c0] {$\footnotesize \overline{x}^{1}$};
				\node[rectangle, draw, color=white, fill=black!65] (c3) [right=of c2] {$\footnotesize x$};
				\tikzset{LabelStyle/.style = {fill=white,sloped}}
				\tikzset{LabelStyle/.style = {color=white, text=black}}
				\path[->]
				(c0) edge [loop left] node {\colorbox{white}{$\scriptstyle 0,1$}} (c1)
				(c1) edge node {\colorbox{white}{$\scriptstyle 0$}} (c3)
				edge [loop right] node {\colorbox{white}{$\scriptstyle 1$}} ()
				(c2) edge node {\colorbox{white}{$\scriptstyle 1$}} (c3)
				edge [loop left] node {\colorbox{white}{$\scriptstyle 0$}} ()
				(c3) edge [bend right] node {\colorbox{white}{$\scriptstyle 0$}} (c1)
				edge [bend left] node {\colorbox{white}{$\scriptstyle 1$}} (c2);
 			\end{tikzpicture}~~~~~~~ &
			~~~~~~~\begin{tikzpicture}[shorten >=0.5pt,node distance=1.75cm,on grid]
				\node[rectangle, draw, fill=black!20] (c0) {$\footnotesize \overline{x}^{0,1}$};
				\node[rectangle, draw] (c1) [right=of c0] {$\footnotesize \overline{x}^{0}$};
				\node[rectangle, draw] (c2) [below=of c0] {$\footnotesize \overline{x}^{1}$};
				\node[rectangle, draw] (c3) [right=of c2] {$\footnotesize x$};
				\tikzset{LabelStyle/.style = {fill=white,sloped}}
				\tikzset{LabelStyle/.style = {color=white, text=black}}
				\path[->]
				(c0) edge [loop left] node {\colorbox{white}{$\scriptstyle 0,1$}} (c1)
				(c1) edge node {\colorbox{white}{$\scriptstyle 0$}} (c3)
				edge [loop right] node {\colorbox{white}{$\scriptstyle 1$}} ()
				(c2) edge node {\colorbox{white}{$\scriptstyle 1$}} (c3)
				edge [loop left] node {\colorbox{white}{$\scriptstyle 0$}} ()
				(c3) edge [bend right] node {\colorbox{white}{$\scriptstyle 0$}} (c1)
				edge [bend left] node {\colorbox{white}{$\scriptstyle 1$}} (c2)
				edge node {\colorbox{white}{$\scriptstyle \{0, 1\}$}} (c0);
 			\end{tikzpicture}  			
		\end{tabular}}
	\caption{(top) Generic description of the four smallest Boolean automata networks 
		that satisfy the conditions of Proposition~\ref{prop_asynchronous-general}. 
		(bottom) Generic (left) asynchronous and (right) general transition graphs of 
		these networks.}
	\label{fig_motiv}
\end{figure}

\begin{proof} 
	As explained above, in order to be synchronism sensitive of level
	$2$, a Boolean automata network needs to have at least one non-sequentialisable
	synchronous transition in its general transition graph $\mathcal{G}_g$. Let us
	uncover the structural conditions that must be satisfied by a minimal network
	$N$ belonging to level $2$, with at least one non-sequentialisable synchronous
	transition. First, $N$ needs to have more than one automaton because, if not,
	synchronism has no sense. If it has size $2$, then, to have a
	non-sequentialisable synchronous transition, $\mathcal{G}_g$ needs to contain a
	generic sub-graph (with asynchronous transitions only) of the following
	form:\\[1.5mm] \centerline{
		\begin{tikzpicture}[shorten >=0.5pt,node distance=1.75cm,on grid]
			\node[rectangle, draw] (c0) {$\footnotesize \overline{x}^{0,1}$};
			\node[rectangle, draw] (c1) [right=of c0] {$\footnotesize \overline{x}^{0}$};
			\node[rectangle, draw] (c2) [below=of c0] {$\footnotesize \overline{x}^{1}$};
			\node[rectangle, draw] (c3) [right=of c2] {$\footnotesize x$};
			\tikzset{LabelStyle/.style = {fill=white,sloped}}
			\tikzset{LabelStyle/.style = {color=white, text=black}}
			\path[->]
			(c1) edge [loop right] node {\colorbox{white}{$\scriptstyle \{1\}$}} ()
			(c2) edge [loop left] node {\colorbox{white}{$\scriptstyle \{0\}$}} ()
			(c3) edge node {\colorbox{white}{$\scriptstyle \{0\}$}} (c1)
			edge node {\colorbox{white}{$\scriptstyle \{1\}$}} (c2);
 		\end{tikzpicture}
	}\\[1.5mm] where $\overline{x}^{\, i,j} = \overline{x}^{\, \{i,j\}} =
	\overline{\overline{x}^i}^j$ (see Equation~\ref{eq_notations_basiques}). This
	sub-graph is the smallest that is necessary for the general transition graph to
	contain a non-sequentialisable synchronous transition $(x,\overline{x}^{\,
	  i,j})$. It is also easy to see that there can be only one non-sequentialisable
	synchronous transition in the general transition graph $\mathcal{G}_g$ of a
	network of size and level $2$.  Moreover, to guarantee synchronism sensitivity
	of level $2$, because fixed points are conserved whatever the updating mode, the
	synchronous transition $(x, \overline{x}^{i,j})$ must go out of a set of
	configurations belonging to an asynchronous stable oscillation. Now, there is
	only one way to create an asynchronous stable oscillation that verifies the
	presence of the asynchronous sub-graph drawn above. It consists in adding
	transitions $(\overline{x}^{i},x)$ and $(\overline{x}^{j},x)$. On this basis, in
	order to create synchronism sensitivity of level $2$, configuration
	$\overline{x}^{\, i,j}$ needs to be a fixed point of $N$.  If not,
	$\overline{x}^{\, i,j}$ is a predecessor of the limit cycle and adding
	synchronism will maintain the recurrence of every asynchronous recurrent
	configuration. Thus, since $\overline{x}^{\, i,j}$ is a fixed point of $N$,
	adding transition $(x,\overline{x}^{i,j})$ makes $\overline{x}^{ i,j}$ become
	the only attractor of $N$ with respect to the general updating mode. Thus, the
	general transition graph of $N$ must have the form  pictured in the
	bottom right panel of Figure~\ref{fig_motiv} (the bottom left panel illustrates
	the asynchronous transition graph of such a $N$ to compare). Hence, only two
	functions $f_0$ are possible. If in configuration $x$ above, $x_0 = 1$, then,
	$f_0(x): x \mapsto x_0 \oplus x_1$ where $\oplus$ denotes the \textsc{xor}
	connector\footnote{$\forall a,b\in \{0,1\},\, a\oplus b= (a\land \neg b) \vee
	  (\neg a \land b)$.}. If in configuration $x$ above, $x_0 = 0$, then $f_0(x): x
	\mapsto \neg ( x_0 \oplus x_1)$. The function $f_1$ is defined similarly. In
	conclusion, there are four minimal networks satisfying the properties of
	Proposition~\ref{prop_asynchronous-general}. They have size $2$ and their
	interaction graphs equal the graph pictured in the top panel of
	Figure~\ref{fig_motiv}. Their two local interaction functions $f_0$ and $f_1$
	either equal $x \mapsto x_0 \oplus x_1$ or $x \mapsto \neg ( x_0 \oplus
	x_1)$.
\end{proof}

Among the four minimal Boolean automata networks described in the proof 
above that are synchronism sensitive of level $2$, those defined by
$$\left\{
	\begin{array}{l}
		f_0(x) = x_0 \oplus x_1\\
		f_1(x) = \neg (x_0 \oplus x_1)\\
	\end{array}
\right. \qquad \text{and} \qquad \left\{ 
	\begin{array}{l}
		f_0(x) = \neg (x_0 \oplus x_1)\\
		f_1(x) = x_0 \oplus x_1\\
	\end{array}
\right.$$
are isomorphic. This result relates intimately synchronism sensitivity to 
non-monotony. Indeed, the smallest patterns that produce synchronism sensitivity
strong singularities are non-monotone networks. Moreover, it is easy to see that 
synchronism sensitivity of level $2$ applies to other non-monotone networks. 
Thus, it would be judicious and interesting to go further and characterise the 
family of synchronism sensitive non-monotone networks of level $2$. Now, in order 
to develop intuition about the dynamical behaviour of general non-monotone 
networks, we choose to focus on a specific class of non-monotone networks, namely 
\textsc{xor} circulant networks.

\section{{\normalsize XOR} circulant networks}
\label{sec_xor}

Let us focus now on the trajectorial and asymptotic dynamical behaviours 
of \textsc{xor} circulant networks. These networks define a class of non-monotone 
Boolean automata networks that is not too large but  has all the necessary properties to 
present complex behaviours.

\subsection{Definitions and basic properties}
\label{sec_xor_def}

A matrix $\mathcal{C}$ of order $n$ whose $i^{\text{th}}$ row vector 
$\mathcal{C}_i$ ($i<n$) is the right-cyclic permutation with offset $i$ of its first 
row vector $\mathcal{C}_0$ so that $\mathcal{C}$ has the following form:
\begin{equation*}
	\mathcal{C} = \begin{pmatrix}
		c_0 & c_1 & c_2 & \ldots & c_{n-1}\\
		c_{n-1} & c_0 & c_1 & \ldots & c_{n-2}\\
		c_{n-2} & c_{n-1} & c_0 & \ldots & c_{n-3}\\
		\vdots & \vdots & \vdots & \ddots & \vdots \\
		c_1 & c_2 & c_3 & \ldots & c_{0}\\
	\end{pmatrix}\text{}
\end{equation*}
is a \emph{circulant matrix}. For any integer $k \geq 2$, a \emph{$k$-\textsc{xor} 
circulant network} of size $n \geq k$ is a network with $n$ automata so that the 
following four properties are satisfied: 
\begin{enumerate}
\item[\textit{(1)}] the adjacency matrix $\mathcal{C}$ of the 
	network interaction graph $G = (V, A)$, called the \emph{interaction matrix}, is a 
	circulant matrix;
\item[\textit{(2)}] each row $\mathcal{C}_{i}$ of this matrix 
	contains exactly $k$ non-null coefficients;
\item[\textit{(3)}] $\mathcal{C}_{0,n-1} = c_{n-1} = 1$;
\item[\textit{(4)}] the local transition function of any 
	automaton $i$ is a \textsc{xor} function such that $\forall x \in \{0,1\}^n,\ 
	f_i(x) = \bigoplus_{j\in V} \mathcal{C}_{i,j} \cdot x_j = \sum_{j\in V} 
	\mathcal{C}_{i,j} \cdot x_j \mod 2$.
\end{enumerate}
Here, \textsc{xor} circulant networks are subjected to the \emph{parallel updating 
mode}, which means that if $x = x(t)$ is the configuration at time step $t$, then 
the network configuration at time step $t + 1$ equals $x(t+1) = F(x) = \mathcal{C} 
\cdot x$ (where operations are taken modulo $2$). Notice that $F$ is then a linear 
function~\cite{Cull1971,Elspas1959,Toledo2005} and that, consequently, a \textsc{xor} 
circulant network is entirely defined by its interaction graph $G = (V, A)$ or by its 
interaction matrix $\mathcal{C}$. Figure~\ref{fig_xor-network} pictures two 
interaction graphs, the first one (left panel) is a $2$-\textsc{xor} circulant 
network of size $4$, the second is a $3$-\textsc{xor} circulant network of size 
$6$.\medskip
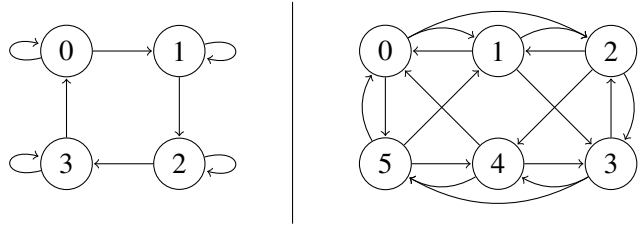
\begin{figure}[t!]
	\centerline{\begin{tabular}{m{38mm}|m{44.5mm}}
			\begin{tikzpicture}[shorten >=0.5pt,node distance=1.5cm,on grid]
				\node[circle, draw] (n_0) {$0$};
				\node[circle, draw] (n_1) [right=of n_0] {$1$};
				\node[circle, draw] (n_2) [below=of n_1] {$2$};
				\node[circle, draw] (n_3) [below=of n_0] {$3$};
				\path[->]
				(n_0) edge node {} (n_1)
				edge [loop left] node {} ()
				(n_1) edge node {} (n_2)
				edge [loop right] node {} ()
				(n_2) edge node {} (n_3)
				edge [loop right] node {} ()
				(n_3) edge node {} (n_0)
				edge [loop left] node {} ();
 			\end{tikzpicture} &
 			~~~~~~~\begin{tikzpicture}[shorten >=0.5pt,node distance=1.5cm,on grid]
				\node[circle, draw] (n_0) {$0$};
				\node[circle, draw] (n_1) [right=of n_0] {$1$};
				\node[circle, draw] (n_2) [right=of n_1] {$2$};
				\node[circle, draw] (n_3) [below=of n_2] {$3$};
				\node[circle, draw] (n_4) [below=of n_1] {$4$};
				\node[circle, draw] (n_5) [below=of n_0] {$5$};
				\path[->]
				(n_0) edge [bend left] node {} (n_1)
				edge [bend left] node {} (n_2)
				edge node {} (n_5)
				(n_1) edge [bend left] node {} (n_2)
				edge node {} (n_3)
				edge node {} (n_0)
				(n_2) edge [bend left] node {} (n_3)
				edge node {} (n_4)
				edge node {} (n_1)
				(n_3) edge [bend left] node {} (n_4)
				edge [bend left] node {} (n_5)
				edge node {} (n_2)
				(n_4) edge [bend left] node {} (n_5)
				edge node {} (n_0)
				edge node {} (n_3)
				(n_5) edge [bend left] node {} (n_0)
				edge node {} (n_1)
				edge node {} (n_4);
 			\end{tikzpicture}
 		\end{tabular}}
	\caption{(left) A $2$-\textsc{xor} circulant network of size $4$ and (right) a 
		$3$-\textsc{xor} circulant network of size $6$.}
	\label{fig_xor-network}
\end{figure}

A $k$-\textsc{xor} circulant network $N$ can be viewed as a \emph{cellular 
automaton}. Indeed, if $N$ has size $n$ and interaction graph $G = (V, A)$, $N$ can 
be modelled by the finite one-dimensional cellular automaton of $n$ cells assimilated 
to the $n$ automata of $N$ and that satisfies what follows. The \emph{neighbourhood} 
$\mathcal{N}$ of a cell $i \in V$ equals the in-neighbourhood of automaton $i$ in $N$: 
$\mathcal{N} = \{j \in V\ |\ (j,i) \in A\}$. The local rule $\gamma: 
\{0, 1\}^{|\mathcal{N}|} \to \{0,1\}$ of the cellular automaton is similar to the 
local transition functions of $N$ and is defined as $\gamma((x_\ell)_{\ell \in 
\mathcal{N}}) = \bigoplus_{\ell \in \mathcal{N}} x_{\ell}$. We make specific use of 
this formalisation to exploit tools of the theory of cellular automata. Thus, if
$x = x(0) \in \{0,1\}^n$ is an initial configuration of $N$, we consider the 
corresponding \emph{space-time diagram}, that is, the grid of $\{0,1\}^n \times 
\mathbb{N}$ whose line $t \in \mathbb{N}$ represents $x(t)$. The \emph{trace} of 
cell or automaton $i$ then corresponds to column $i$ of this grid, that is, to the 
sequence $(x_i(t))_{t \in \mathbb{N}}$. Furthermore, for an arbitrary configuration 
$x$ and an automaton $i$, $R_i(x)$ denotes the configuration that satisfies $\forall
j \in V,\ R_i(x)_j = x_{2i-j \mod n}$ and is called the \emph{reflection of $x$} with 
respect to $i$. We write $\widetilde{N}$ to denote the \emph{reflection of
$N$}, \emph{i.e.}, the $k$-\textsc{xor} circulant network whose interaction matrix
is $\mathstrut^t\mathcal{C}$. In the sequel, unless it is made explicit, 
$\mathcal{N}^-(i)$ (resp. $\mathcal{N}^+(i)$) denotes the \emph{in-neighbourhood} 
(resp. the \emph{out-neighbourhood}) of automaton $i$ and
$\widetilde{\mathcal{N}}^-(i)$ (resp. $\widetilde{\mathcal{N}}^+(i)$) denotes
its in-neighbourhood (resp. its out-neighbourhood) in $\widetilde{N}$. Thus,
for any two automata $i, j$, $j \in \mathcal{N}^-(i) \iff j \in
\widetilde{\mathcal{N}}^+(i)$. $\widetilde{F}$ denotes the global transition 
function of $\widetilde{N}$ if $F$ denotes that of $N$. Notice that $\widetilde{F}$ 
represents the reflected global transition function of $N$. By default, unless $N$ is 
the reflection of another $k$-\textsc{xor} circulant network that
was introduced before, its automata are supposed to be numbered as suggested above, 
\emph{i.e.}, so that $c_{n-1}=\mathcal{C}_{0,n-1}=1$. This way, $\{(i,i+1 \mod n)\ |\ 
i\in V\}\subseteq A$ defines a Hamiltonian circuit in the structure of $N$ and 
$\{(i+1 \mod n, i)\ |\ i\in V\}\subseteq A$ defines a Hamiltonian circuit in the 
structure of its reflection $\widetilde{N}$.\medskip

Let us now list in the proposition below some basic properties of 
\textsc{xor} circulant networks that follow directly from the definitions of 
\textsc{xor} functions and circular matrices.
\begin{proposition}
	\label{prop_basic}
	~
	\begin{enumerate}
	\item[1.]\label{prop_number} The number of $k$-\textsc{xor} circulant networks
		of size $n$ equals $\binom{n-1}{k-1}$.
	\end{enumerate}
	Any $k$-\textsc{xor} circulant network of size $n$ satisfies the following
	properties:
	\begin{enumerate}
	\item[2.]\label{prop_0} Configuration $(0, \ldots, 0)$ is a stable
		configuration.
	\item[3.]\label{prop_1} Configuration $(1,\ldots,1)$ is a predecessor of
		$(0,\ldots,0)$ if $k$ is even and it is a stable configuration if $k$ is odd.
	\item[4.]\label{prop_isomorphism} The trajectory of a configuration $x$ is
		isomorphic to that of any configuration $y$ which is a circular permutation
		of $x$.
	\end{enumerate}
\end{proposition}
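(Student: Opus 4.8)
The plan is to prove each of the four items separately, since they are largely independent and each follows quite directly from the definitions of circulant matrices and \textsc{xor} functions. None of these should present a serious obstacle; the whole proposition is essentially a warm-up that fixes notation and records facts used later.

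\textbf{Item 1 (counting).} A $k$-\textsc{xor} circulant network of size $n$ is determined by its interaction matrix $\mathcal{C}$, which is itself determined by its first row $(c_0, c_1, \ldots, c_{n-1}) \in \{0,1\}^n$. The constraints are that exactly $k$ of the coefficients are non-null (property \textit{(2)}) and that $c_{n-1} = 1$ (property \textit{(3)}); the local transition functions are then forced by property \textit{(4)}. So the count is the number of ways to choose the remaining $k-1$ non-null positions among the $n-1$ positions $c_0, \ldots, c_{n-2}$, which is $\binom{n-1}{k-1}$. I would also remark that each such choice genuinely yields a valid network, i.e.\ that every chosen arc $(j,i)$ is effective in the sense of Equation~\ref{eq_effectivite}: for a \textsc{xor} function, flipping any input appearing with coefficient $1$ flips the output in \emph{every} configuration, so the interaction graph $G$ is exactly the one encoded by $\mathcal{C}$.

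\textbf{Items 2 and 3 (the uniform configurations).} Since $F(x) = \mathcal{C}\cdot x$ with arithmetic mod $2$, linearity gives $F(0,\ldots,0) = (0,\ldots,0)$, so $(0,\ldots,0)$ is a fixed point, hence a stable configuration under the parallel mode. For $(1,\ldots,1)$, compute $F(1,\ldots,1)_i = \sum_{j\in V}\mathcal{C}_{i,j} \bmod 2 = k \bmod 2$, using property \textit{(2)} that each row has exactly $k$ non-null (hence equal to $1$) coefficients. Thus $F(1,\ldots,1) = (k \bmod 2, \ldots, k \bmod 2)$, which equals $(0,\ldots,0)$ when $k$ is even and $(1,\ldots,1)$ when $k$ is odd; this is exactly item~3.

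\textbf{Item 4 (circular permutation invariance).} Let $\sigma$ denote the circular shift on configurations, $\sigma(x)_i = x_{i-1 \bmod n}$. The key observation is that $\mathcal{C}$ commutes with $\sigma$: because $\mathcal{C}$ is circulant, $\mathcal{C}_{i,j}$ depends only on $j - i \bmod n$, so $\mathcal{C}_{i,j} = \mathcal{C}_{i-1 \bmod n,\, j-1 \bmod n}$, from which $F(\sigma(x)) = \sigma(F(x))$ follows by a direct index computation. Consequently $F^t(\sigma(x)) = \sigma(F^t(x))$ for all $t$, so applying $\sigma$ to every configuration of the trajectory of $x$ yields the trajectory of $\sigma(x)$; since $\sigma$ is a bijection on $\{0,1\}^n$ mapping transitions to transitions, this is a graph isomorphism between the two trajectories. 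Iterating, the same holds for any power $\sigma^m$, i.e.\ for any $y$ that is a circular permutation of $x$. The only mildly delicate point is to state precisely what ``isomorphic trajectories'' means (an isomorphism of the induced subgraphs of $\mathcal{G}_\pi$ respecting the starting configuration), but once that is pinned down the argument is immediate; I expect this bookkeeping to be the closest thing to an obstacle, and it is minor.
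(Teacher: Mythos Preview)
Your proof is correct and aligns with the paper's treatment: the paper does not give a formal proof of this proposition at all, stating only that these properties ``follow directly from the definitions of \textsc{xor} functions and circular matrices.'' Your argument is precisely the explicit unpacking of those definitions (first-row determination for the count, linearity of $F$ for the uniform configurations, and commutation of $F$ with the cyclic shift for item~4), so there is nothing to compare.
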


\subsection{Results}

In what follows, unless it is mentioned, the automata are always taken 
modulo the size $n$ of the network considered.

\subsubsection{General $k$-\textsc{xor} circulant networks}

Here, we concentrate on general $k$-\textsc{xor} circulant networks and 
exploit the cellular automata formalisation presented above to derive some features 
of the dynamical behaviours of these networks.
\begin{lemma}
	\label{lem_mask}
	Let $N$ be a $k$-\textsc{xor} circulant network of size $n$ with automata set
	$V$ and reflected global transition function $\widetilde{F}$. For any
	automaton $i$, let $M_i(t)$ denote the set of	automata which have state $1$ in 
	configuration $\widetilde{F}^t(\overline{0}^i)$. Then, $\forall x(0)\in \{0,1\}^n, 
	\forall t \in \mathbb{N},\ x_i(t) = \bigoplus_{j \in M_i(t)} x_j(0)$.
\end{lemma}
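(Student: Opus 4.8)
The plan is to exploit the linearity of the global transition function $F$ of a \textsc{xor} circulant network, which holds because $F(x)=\mathcal{C}\cdot x$ over $\mathbb{F}_2$. First I would observe that any configuration $x(0)=(x_0(0),\dots,x_{n-1}(0))$ decomposes over $\mathbb{F}_2^n$ as $x(0)=\bigoplus_{j\in V}x_j(0)\cdot\overline{0}^{\,j}$, where $\overline{0}^{\,j}$ is the indicator configuration of automaton $j$ (recall the notation from Equation~\ref{eq_notations_basiques}). Applying $F^t$ and using linearity, $x(t)=F^t(x(0))=\bigoplus_{j\in V}x_j(0)\cdot F^t(\overline{0}^{\,j})$. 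So the state $x_i(t)$ of automaton $i$ at time $t$ is the $i$-th coordinate of this sum, namely $x_i(t)=\bigoplus_{j\in V}x_j(0)\cdot\big(F^t(\overline{0}^{\,j})\big)_i$.

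The remaining task is to recognise the coefficient $\big(F^t(\overline{0}^{\,j})\big)_i$ as the indicator of whether $j\in M_i(t)$, where $M_i(t)$ is the support of $\widetilde{F}^t(\overline{0}^{\,i})$. This is where the reflection enters: by the transpose/reflection correspondence noted in the excerpt (with $\widetilde{F}$ associated to $\mathstrut^t\mathcal{C}$), the $(i,j)$ entry of $\mathcal{C}^t$ equals the $(j,i)$ entry of $(\mathstrut^t\mathcal{C})^t$. Concretely, $\big(F^t(\overline{0}^{\,j})\big)_i=(\mathcal{C}^t)_{i,j}=(\mathstrut^t(\mathcal{C}^t))_{j,i}=((\mathstrut^t\mathcal{C})^t)_{j,i}=\big(\widetilde{F}^t(\overline{0}^{\,i})\big)_j$. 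The last quantity is $1$ precisely when $j$ belongs to the support of $\widetilde{F}^t(\overline{0}^{\,i})$, i.e. when $j\in M_i(t)$. Substituting into the formula for $x_i(t)$ gives $x_i(t)=\bigoplus_{j\in M_i(t)}x_j(0)$, which is the claim. A clean way to carry the induction, if one prefers to avoid matrix powers explicitly, is to induct on $t$: for $t=0$, $M_i(0)=\{i\}$ and the statement is trivial; for the inductive step, expand $x_i(t+1)=f_i(x(t))=\bigoplus_{j\in\mathcal{N}^-(i)}x_j(t)$, substitute the inductive hypothesis for each $x_j(t)$, and reorganise the double \textsc{xor} sum, using that $M_i(t+1)$ is the symmetric difference $\bigtriangleup_{j\in\mathcal{N}^-(i)}M_j(t)$, which is exactly how the support of $\widetilde{F}^{t+1}(\overline{0}^{\,i})$ evolves under one step of $\widetilde{F}$.

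I expect the only genuine subtlety to be bookkeeping around the reflection: one must be careful that it is $\widetilde{F}$ and not $F$ whose orbit of $\overline{0}^{\,i}$ computes the ``mask'' $M_i(t)$, and this is precisely dictated by the fact that passing from ``which $j$'s feed into $i$'' to ``which $i$'s are fed by $j$'' transposes the adjacency matrix. Everything else is a routine manipulation of finite \textsc{xor} sums and indices taken modulo $n$, and the circulant structure is not even needed for the lemma itself — only linearity of $F$ and the transpose relation between $N$ and $\widetilde{N}$.
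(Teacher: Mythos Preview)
Your proposal is correct. The main linear-algebraic argument (decompose $x(0)$ over the basis $(\overline{0}^{\,j})_j$, apply $F^t$ by linearity, and identify $(F^t(\overline{0}^{\,j}))_i=(\mathcal{C}^t)_{i,j}=((\mathstrut^t\mathcal{C})^t)_{j,i}=(\widetilde{F}^t(\overline{0}^{\,i}))_j$) is a clean one-shot proof that the paper does not give. The paper instead argues by induction on $t$: it applies the inductive hypothesis to the shifted initial configuration $y(1)$ to obtain $y_i(t+1)=\bigoplus_{j\in M_i(t)}y_j(1)$, expands each $y_j(1)$ as $\bigoplus_{\ell\in\widetilde{\mathcal{N}}^+(j)}y_\ell(0)$, and then reorganises the double \textsc{xor} using $\widetilde{F}(\overline{0}^{M_i(t)})=\overline{0}^{M_i(t+1)}$. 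Your alternative induction is also valid but is organised dually to the paper's: you expand $x_i(t+1)=\bigoplus_{k\in\mathcal{N}^-(i)}x_k(t)$ and use the identity $M_i(t+1)=\bigtriangleup_{k\in\mathcal{N}^-(i)}M_k(t)$ (coming from $\widetilde{F}^{t+1}(\overline{0}^{\,i})=\widetilde{F}^t(\widetilde{F}(\overline{0}^{\,i}))$), whereas the paper uses $\widetilde{F}^{t+1}(\overline{0}^{\,i})=\widetilde{F}(\widetilde{F}^t(\overline{0}^{\,i}))$. Your transpose argument has the advantage of making completely transparent \emph{why} the reflected network appears, and you are right that circulancy plays no role here --- only linearity of $F$ and the relation $\widetilde{F}\leftrightarrow\mathstrut^t\mathcal{C}$ matter.
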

\begin{proof}  
	We prove Lemma~\ref{lem_mask} by induction on $t \in \mathbb{N}$. For 
	$t = 0$, $M_i(0) = \{i\}$ holds by definition of configuration $\overline{0}^i$. Thus, 
	$\forall x(0) \in \{0,1\}^n,\ x_i(0) = \bigoplus_{j \in M_i(0)} x_j(0)$. Now, 
	suppose that $\forall x(0) \in \{0,1\}^n,\ x_i(t) = \bigoplus_{j \in M_i(t)} x_j(0)$ 
	and consider the initial configuration $y(0) \in \{0, 1\}^n$. Since $y(t+1) = 
	\widetilde{F}^{t+1}(y(0)) = \widetilde{F}^t(y(1))$, the induction hypothesis applied 
	to configuration $x(0) = y(1)$ yields $y_i(t+1) = \bigoplus_{j \in M_i(t)} y_j(1)$.
	By definition, $\forall j \in V,\ y_j(1) = f_j(y(0)) = \bigoplus_{\ell \in
	\mathcal{N}^-(j)} y_\ell(0) = \bigoplus_{\ell \in \widetilde{\mathcal{N}}^+
	(j)} y_\ell(0)$. Thus, because the \textsc{xor} connector is commutative and 
	associative, we have:
	\begin{equation*}
		y_i(t+1)\ =\ \bigoplus_{j \in M_i(t)} \big( \bigoplus_{\ell \in 
		\widetilde{\mathcal{N}}^+(j)} y_\ell(0)\big)\ =\ \bigoplus_{\{\ell \text{ s.t. } 
		|\widetilde{\mathcal{N}}^-(\ell)\, \cap\, M_i(t)|\, \equiv\, 1 \mod 2\}} 
		y_\ell(0)\text{.}
	\end{equation*}
	Now, remark that $\forall t\in\mathbb{N},\ \widetilde{F}(\overline{0}^{M_i(t)}) = 
	\overline{0}^{M_i(t+1)}$ by 	definition. Then, $\forall \ell \in V, 
	\overline{0}^{M_i(t+1)}_\ell= 1$ if and only if $|\widetilde{\mathcal{N}}^-(\ell) 
	\cap M_i(t)|\equiv 1 \mod 2$. From this, we derive that $y_i(t+1) = \bigoplus_{j 
	\in M_i(t+1)} y_j(0)$ and then 	$\forall t \in \mathbb{N},\ x_i(t) = \bigoplus_{j 
	\in M_i(t)} x_j(0)$.
\end{proof}

\begin{lemma}
	\label{lem_symmetric-network} 
	Let $N$ be a $k$-\textsc{xor} circulant network of size $n$ with automata set $V$ 
	and global transition function $F$. For any automaton $i$ and for any configuration 
	$x \in \{0,1\}^n$, $\widetilde{F}(R_i(x)) = R_i(F(x))$ holds.
\end{lemma}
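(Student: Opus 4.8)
The plan is to verify the identity coordinate-wise, unwinding both sides through the definitions of the reflection operator $R_i$, the global transition function $F$ of a $k$-\textsc{xor} circulant network, and the reflected transition function $\widetilde{F}$. Fix an automaton $i$ and a configuration $x\in\{0,1\}^n$. On the left-hand side, $F(x)$ is first computed, then reflected: by definition of $R_i$, for every $j\in V$ we have $R_i(F(x))_j = F(x)_{2i-j}$, where indices are modulo $n$. On the right-hand side, $R_i(x)$ is formed first, then pushed forward by $\widetilde F$. So the whole statement reduces to showing, for each $j\in V$, that
\begin{equation*}
	\widetilde{F}(R_i(x))_j \;=\; F(x)_{2i-j}\,.
\end{equation*}

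**Next I would expand each side using property (4) of the definition.** For $\widetilde N$, whose interaction matrix is $\mathstrut^t\mathcal C$, the entry in row $j$, column $\ell$ is $\mathcal{C}_{\ell,j}$, so $\widetilde{F}(R_i(x))_j = \bigoplus_{\ell\in V}\mathcal{C}_{\ell,j}\cdot R_i(x)_\ell = \bigoplus_{\ell\in V}\mathcal{C}_{\ell,j}\cdot x_{2i-\ell}$. Meanwhile $F(x)_{2i-j} = \bigoplus_{m\in V}\mathcal{C}_{2i-j,\,m}\cdot x_m$. In the first sum I would perform the change of summation variable $m = 2i-\ell$ (a bijection of $V$ modulo $n$, since $\ell\mapsto 2i-\ell$ is an involution on $\mathbb{Z}/n\mathbb{Z}$), which turns it into $\bigoplus_{m\in V}\mathcal{C}_{2i-m,\,j}\cdot x_m$. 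Thus the claim comes down to the purely matrix-theoretic identity
\begin{equation*}
	\mathcal{C}_{2i-m,\,j}\;=\;\mathcal{C}_{2i-j,\,m}\qquad\text{for all }m,j\in V\,.
\end{equation*}

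**The one genuine point to check** is this last equality, and it is exactly where circulancy is used. Since $\mathcal C$ is circulant, $\mathcal{C}_{a,b}$ depends only on $b-a \bmod n$; write $\mathcal{C}_{a,b} = c_{\,b-a \bmod n}$. Then $\mathcal{C}_{2i-m,\,j} = c_{\,j-(2i-m)} = c_{\,j+m-2i}$ and $\mathcal{C}_{2i-j,\,m} = c_{\,m-(2i-j)} = c_{\,m+j-2i}$; these agree since addition modulo $n$ is commutative. (Equivalently: circulant matrices are precisely those commuting with the cyclic shift, and $R_i$ is a reflection, which conjugates the shift to its inverse; the transpose $\mathstrut^t\mathcal C$ absorbs this inversion. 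One may also phrase it via Proposition~\ref{prop_basic}, item~4, on circular-permutation invariance, together with the reflection symmetry inherent in a circulant structure.) Reassembling: $\widetilde F(R_i(x))_j = \bigoplus_{m\in V} c_{\,m+j-2i}\cdot x_m = \bigoplus_{m\in V}\mathcal{C}_{2i-j,\,m}\cdot x_m = F(x)_{2i-j} = R_i(F(x))_j$ for every $j$, hence $\widetilde F(R_i(x)) = R_i(F(x))$.

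**I expect no serious obstacle here**; the only thing requiring care is bookkeeping of indices modulo $n$ and making sure the substitution $\ell\mapsto 2i-\ell$ is recognised as a bijection of $V$ so that reindexing the \textsc{xor}-sum is legitimate (which it is, because $\oplus$ is commutative and associative by property~(4), just as exploited in the proof of Lemma~\ref{lem_mask}). An alternative, slightly more conceptual route would be to note that $\widetilde F$ is the linear map given by $\mathstrut^t\mathcal C$ and $R_i$ is the linear involution $x\mapsto (x_{2i-j})_j$, and then check the operator identity $\mathstrut^t\mathcal C\cdot R_i = R_i\cdot\mathcal C$ directly on matrices — but the coordinate computation above is the most transparent and self-contained.
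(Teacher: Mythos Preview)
Your proof is correct and follows essentially the same approach as the paper's: a coordinate-wise computation that unwinds $\widetilde{F}(R_i(x))_j$, performs the change of variable $\ell\mapsto 2i-\ell$, and then invokes circulancy to match it with $F(x)_{2i-j}$. The only difference is cosmetic---you work with matrix entries $\mathcal{C}_{a,b}=c_{b-a}$ directly, whereas the paper phrases the same steps in the neighbourhood language $\mathcal{N}^-$, $\widetilde{\mathcal{N}}^-$; the key identity $\mathcal{C}_{2i-m,j}=\mathcal{C}_{2i-j,m}$ in your argument is exactly the paper's observation that $j\in\mathcal{N}^-(2i-\ell)\iff \ell\in\mathcal{N}^-(2i-j)$.
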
  
\begin{proof}
	For any automaton $j$, the following holds:
	\begin{equation*}
		\widetilde{F}(R_i(x))_j\ =\ \bigoplus_{\ell \in \widetilde{\mathcal{N}}^-(j)} 
		(R_i(x))_\ell\ =\ \bigoplus_{\ell \in \widetilde{\mathcal{N}}^- (j)} x_{2i-\ell}\ 
		=\ \bigoplus_{\{\ell \text{ s.t. } 2i-\ell\, \in\, \widetilde{\mathcal{N}}^-(j)\}}
		x_{\ell}\ =\ \bigoplus_{\{\ell \text{ s.t. } j\, \in\, \mathcal{N}^-(2i-\ell)\}} x_{\ell}\text{.}
	\end{equation*} 
	Now, if $j \in \mathcal{N}^-(2i-\ell)$, then all automata $a, a' \in V$ of $N$ such that 
	$a-a' = j-(2i-\ell)$ are such that $a \in \mathcal{N}^-(a')$. In particular, if automaton 
	$j \in \mathcal{N}^-(2i-\ell)$, then $\ell \in \mathcal{N}^-(2i-j)$. Hence, we have:
	\begin{equation*}
		\bigoplus_{\{\ell \text{ s.t. } j\, \in\, \mathcal{N}^-(2i-\ell)\}} x_{\ell}\ =\ 
		\bigoplus_{\ell\, \in\, \mathcal{N}^-(2i-j)} x_{\ell} = F(x)_{2i-j}\ =\ 
		(R_i(F(x)))_j\text{,}
	\end{equation*}
	and Lemma~\ref{lem_symmetric-network} follows.
\end{proof}

\begin{proposition}
	\label{prop_symmetric-space-time-diagram} 
	Let $N$ be a $k$-\textsc{xor} circulant network of size $n$ with automata set $V$ 
	and global transition function $F$. For any automaton $i$ and for the initial 
	configuration $x(0)=\overline{0}^i$, it holds that $\forall t \in \mathbb{N}, 
	\widetilde{F}^t(x(0)) = R_i(x(t))$.
\end{proposition}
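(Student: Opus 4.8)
The plan is to prove the identity by induction on $t\in\mathbb{N}$, using Lemma~\ref{lem_symmetric-network} as the engine that drives the inductive step. Throughout, recall that under the parallel updating mode $x(t) = F^t(x(0)) = F^t(\overline{0}^i)$, so the statement to establish is exactly $\widetilde{F}^t(\overline{0}^i) = R_i\big(F^t(\overline{0}^i)\big)$ for all $t$.

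For the base case $t=0$, one has $\widetilde{F}^0(\overline{0}^i) = \overline{0}^i$, and it remains to check that $R_i$ fixes $\overline{0}^i$. This is immediate from the definition of the reflection: $R_i(\overline{0}^i)_j = \overline{0}^i_{2i-j \bmod n}$, which equals $1$ if and only if $2i-j \equiv i \pmod n$, i.e.\ if and only if $j \equiv i \pmod n$; hence $R_i(\overline{0}^i) = \overline{0}^i = x(0)$, as required.

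For the inductive step, assume $\widetilde{F}^t(\overline{0}^i) = R_i(x(t))$. Then
\begin{equation*}
	\widetilde{F}^{t+1}(\overline{0}^i) = \widetilde{F}\big(\widetilde{F}^t(\overline{0}^i)\big) = \widetilde{F}\big(R_i(x(t))\big) = R_i\big(F(x(t))\big) = R_i(x(t+1)),
\end{equation*}
where the third equality is Lemma~\ref{lem_symmetric-network} applied to the configuration $x = x(t)$, and the last is the definition $x(t+1) = F(x(t))$. This closes the induction.

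There is essentially no serious obstacle here: the only point worth a moment's care is the base case, where one must confirm that the reflection $R_i$ leaves the singleton-active configuration $\overline{0}^i$ invariant — which it does precisely because $2i-i \equiv i$. All the real content has already been packaged into Lemma~\ref{lem_symmetric-network}, so the proposition is a short consequence of it together with this trivial fixed-point observation.
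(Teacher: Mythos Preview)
Your proof is correct and follows essentially the same approach as the paper: both argue by induction on $t$, dispatch the base case by noting that $R_i(\overline{0}^i)=\overline{0}^i$, and use Lemma~\ref{lem_symmetric-network} to push through the inductive step. Your treatment of the base case is in fact slightly more explicit than the paper's (you spell out the computation $2i-j\equiv i\iff j\equiv i$), but the argument is otherwise identical.
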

\begin{proof}
	Proposition~\ref{prop_symmetric-space-time-diagram} is proven by 
	induction on $t$. Let $t = 0$. Property $\widetilde{F}^t(x(0)) = R_i(x(t))$ is true 
	because $x(0) = \overline{0}^i$.  Suppose that it is true for $t$. Then, we have 
	$\widetilde{F}^{t+1}(x(0)) = \widetilde{F}(\widetilde{F}^t(x(0))) = 
	\widetilde{F}(R_i(x(t))$. By Lemma~\ref{lem_symmetric-network}, 
	$\widetilde{F}(R_i(x(t)) = R_i(F(x(t)) = R_i(x(t+1))$, which is the expected 
	result.
\end{proof}
\begin{figure}[t!]
	\centerline{
			\begin{tabular}{ccc}
			\includegraphics[scale=0.8]{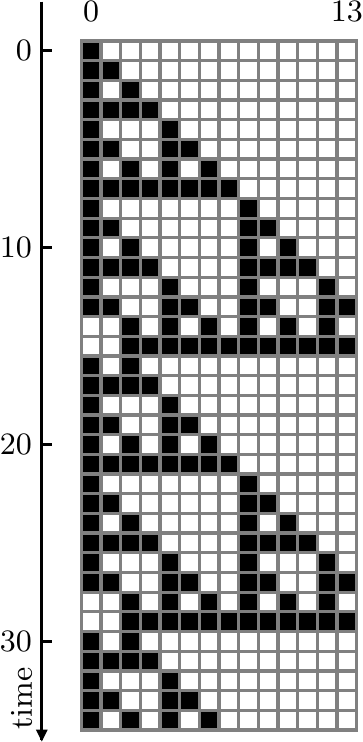}~ &
			~~\includegraphics[scale=0.8]{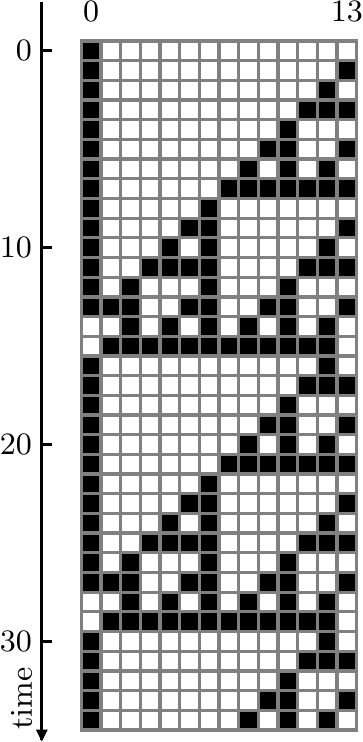}~ &
			~~\includegraphics[scale=0.8]{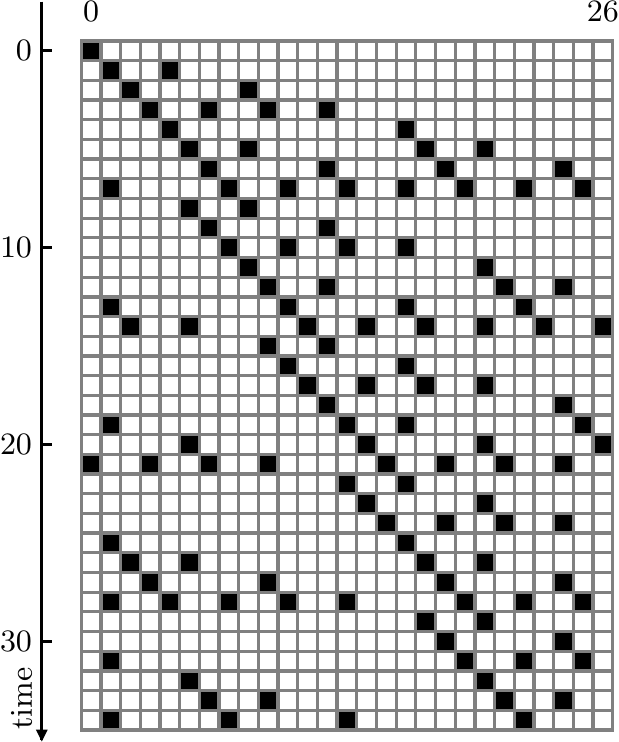}\\
			$(a)$ & $(b)$ & $(c)$
		\end{tabular}
	}
	\caption{Space-time diagrams $(a)$ of a $2$-\textsc{xor} circulant network of
	size $14$ and interaction-step $s=0$ (cf. page~\pageref{sec_twoxor}), $(b)$
	of its reflected network and $(c)$ of another $2$-\textsc{xor} circulant
	network of size $27$ and interaction-step $4$.}
	\label{fig_xor-ca-17-24}
\end{figure}

This result comes from the fact that $F$ and $\widetilde{F}$ are the global 
transition functions of two reflected $k$-\textsc{xor} circulant networks that are 
isomorphic by definition (see Figure~\ref{fig_xor-ca-17-24}). 
Proposition~\ref{prop_symmetric-space-time-diagram} implies that, for any automaton 
$i$, the space-time diagram of $(\overline{0}^{i}(t))_{t \in \mathbb{N}}$ is the reflected 
space-time diagram of $(\overline{0}^{M_i(t)})_{t \in \mathbb{N}}$ with respect to $i$ 
and is related to the trace of automaton $i$. Thus, the space-time diagrams of 
configurations of density $\frac{1}{n}$ carry information on the global behaviours of 
$N$. This is notably due to the fact that configurations of density $\frac{1}{n}$ are unit
vectors and because of the underlying superposition principle for linear maps. That leads 
us to give the following proposition. 
\begin{proposition}
	\label{prop_densite-1} 
	Let $N$ be a $k$-\textsc{xor} circulant network of size $n$ with automata set $V$ 
	and global transition function $F$. The maximum convergence time, \emph{i.e.}, the 
	maximal transient trajectory length, is reached by configurations of density 
	$\frac{1}{n}$. Moreover, let $p_\ast$ be the period of the attractors reached by 
	configurations of density $\frac{1}{n}$. Then, for any configuration $x$ of $N$, 
	the period of its attractor divides $p_\ast$.
\end{proposition}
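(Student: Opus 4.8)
Since $F$ is linear over $\mathbb{F}_2$ (property noted right after the definition of \textsc{xor} circulant networks), the whole dynamics is governed by linear algebra. The key observation is that the standard basis vectors of $\mathbb{F}_2^n$ are exactly the configurations of density $\frac 1n$, namely the $\overline{0}^i$ for $i \in V$, and any configuration $x$ decomposes as $x = \bigoplus_{i \text{ s.t. } x_i=1} \overline{0}^i$. By linearity, $F^t(x) = \bigoplus_{i \text{ s.t. } x_i=1} F^t(\overline{0}^i)$ for every $t$. This immediately ties the trajectory of an arbitrary $x$ to the trajectories of the density-$\frac 1n$ configurations.

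For the convergence time claim, recall that a linear map $F$ on a finite vector space has a minimal preperiod $\tau$ and a minimal period $p$ such that $F^{\tau+p} = F^{\tau}$ and these are the smallest such values; for any $x$ the preperiod of $x$ is $\min\{t : F^{t+p'}(x) = F^t(x)\text{ for some }p'\}$ and is at most $\tau$, with equality achieved by some $x$. The first step is to show that $\tau$ equals $\max_i \tau_i$ where $\tau_i$ is the preperiod of $\overline{0}^i$. One inequality is trivial: each $\tau_i \le \tau$ since $\overline{0}^i$ is a configuration. For the reverse, by Proposition~\ref{prop_isomorphism} (item 4 of Proposition~\ref{prop_basic}) all the $\overline{0}^i$ have isomorphic trajectories, so in fact $\tau_0 = \tau_1 = \cdots = \tau_{n-1} =: \tau_\ast$, and likewise they share a common period $p_\ast$. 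Then for any $x$, writing $x = \bigoplus_{i \in S} \overline{0}^i$ with $S = \{i : x_i = 1\}$, linearity gives $F^{\tau_\ast + p_\ast}(x) = \bigoplus_{i\in S} F^{\tau_\ast+p_\ast}(\overline{0}^i) = \bigoplus_{i\in S} F^{\tau_\ast}(\overline{0}^i) = F^{\tau_\ast}(x)$, so the preperiod of every configuration is at most $\tau_\ast$. Hence $\tau = \tau_\ast$, i.e.\ the maximum convergence time is attained at density $\frac 1n$.

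For the period divisibility claim, the same computation shows that every $x$ satisfies $F^{\tau_\ast + p_\ast}(x) = F^{\tau_\ast}(x)$, so once $x$ has entered its attractor (which happens by time $\tau_\ast$), applying $F^{p_\ast}$ fixes it; therefore the period of the attractor of $x$ divides $p_\ast$. (One should note $p_\ast$ is a common period of all the $\overline{0}^i$: since their trajectories are isomorphic they all have the \emph{same} attractor period, and that common value is $p_\ast$; alternatively, take $p_\ast$ to be the period of any one of them.)

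**Main obstacle.** The only subtlety is making the density-$\frac1n$ configurations genuinely attain the worst case rather than merely bound it, and this is exactly where Proposition~\ref{prop_isomorphism} does the work: without knowing the $\overline{0}^i$ all behave alike, one would only get that the worst preperiod over \emph{all} configurations is $\le \max_i \tau_i$, which combined with $\tau_i \le \tau$ still closes the loop — so in fact even the isomorphism is a convenience rather than a necessity for the preperiod statement, but it is what lets us speak of a single well-defined period $p_\ast$ "of the attractors reached by configurations of density $\frac1n$" as the statement does. The argument is otherwise a routine unwinding of linearity over $\mathbb{F}_2$; I would keep it short and lean on Proposition~\ref{prop_basic}.4 explicitly.
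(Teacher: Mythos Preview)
Your argument is correct. Both proofs rest on the linearity of $F$ over $\mathbb{F}_2$ and on Proposition~\ref{prop_basic}.4 to obtain a common preperiod $t_\ast$ and period $p_\ast$ for all density-$\frac{1}{n}$ configurations, but they exploit linearity in dual ways. The paper argues coordinate-wise: by Lemma~\ref{lem_mask}, the trace $x_i(t)$ of automaton $i$ equals $\bigoplus_{j\in M_i(t)} x_j(0)$, and by Proposition~\ref{prop_symmetric-space-time-diagram} the sequence $(\overline{0}^{M_i(t)})_t$ is, up to reflection, the orbit of $\overline{0}^i$ under $F$, hence has preperiod $t_\ast$ and period $p_\ast$; so every trace, and therefore $x(t)$ itself, settles by time $t_\ast$ into a cycle of length dividing $p_\ast$. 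You instead decompose $x$ over the standard basis $\{\overline{0}^i\}$ and push $F^t$ through the sum directly. Your route is shorter and bypasses the reflected-network machinery altogether; the paper's route has the merit of showing explicitly how the trace of each individual automaton is governed by a single density-$\frac{1}{n}$ orbit, which is precisely what Lemma~\ref{lem_mask} and Proposition~\ref{prop_symmetric-space-time-diagram} were set up to capture.
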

\begin{proof}
	All configurations of density $\frac{1}{n}$ are cyclic permutations of 
	each other. Thus, by Proposition~\ref{prop_basic}.4 their trajectories are isomorphic. 
	They consequently reach their attractor of period $p_\ast$ at the same time $t_\ast$. 
	Now, let $x$ be an arbitrary configuration and $i$ an automaton. By 
	Proposition~\ref{prop_symmetric-space-time-diagram}, the space-time diagram of 
	$(\overline{0}^{M_i(t)})_{t \in \mathbb{N}}$ is the reflected space-time diagram of 
	$(\overline{0}^i(t))_{t \in \mathbb{N}}$ with respect to $i$. Thus, the space-time 
	diagram of $(\overline{0}^{M_i(t)})_{t \in \mathbb{N}}$ reach its attractor at time 
	$t_\ast$ and its period is $p_\ast$. This means that, $\forall i \in N$, the trace of 
	automaton $i$ has period $p_\ast$ and enters its cyclic behaviour before 
	$t_\ast$. As a result, the trajectory of $x$ reaches its attractor before $t_\ast$ and 
	the period of the latter divides $p_\ast$.
\end{proof}

\subsubsection{$2$-\textsc{xor} circulant networks}
\label{sec_twoxor}

Now, we focus on $2$-\textsc{xor} circulant networks of arbitrary size $n$ 
and pay attention to the space-time diagrams of configurations of density 
$\frac{1}{n}$. What is called the \emph{interaction-step} of such a network $N$ is 
the smallest integer $s \neq 1 <n$ such that $\forall i \in V$, $(i, i+s) \in A$. 
Figure~\ref{fig_xor-ca-17-24} $(a)$ and $(b)$ illustrates as expected that when 
$s = 0$ the space-time diagram is the Sierpinski triangle. For other values of $s$, 
space-time diagrams are deformed Sierpinski triangles. These observations result in
the following lemma (that is used further to analyse $2$-\textsc{xor} circulant 
networks of size $n = 2^p$, $p \in \mathbb{N^*}$, and interaction step $s=0$).
\begin{lemma}
	\label{lem_local}
	If $N$ is a $2$-\textsc{xor} circulant network of size $n$ with
	interaction-step $s = 0$ then $\forall i \in V, \forall q \in \mathbb{N},\ x_i(2^q) 
	= x_{(i-2^q)}(0) \oplus x_i(0)$.
\end{lemma}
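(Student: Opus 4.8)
The plan is to first read off the one-step dynamics in the case $s=0$ and then prove the claim by induction on $q$, the only nontrivial ingredient being the cancellation $a\oplus a=0$ in $\mathbb{F}_2$ (equivalently, the ``freshman's dream'' applied to the shift operator).

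\textbf{Step 1: the local rule.} A $2$-\textsc{xor} circulant network has exactly two non-null coefficients per row, one of which is $c_{n-1}=\mathcal{C}_{0,n-1}=1$. Having interaction-step $s=0$ means that the arcs $(i,i)$ belong to $A$, i.e., the second non-null coefficient is $c_0=1$. Hence $\mathcal{C}_{i,j}=1$ exactly when $j\in\{i,\,i-1\}$, so $f_i(x)=x_i\oplus x_{i-1}$, and therefore
\[
\forall t\in\mathbb{N},\ \forall i\in V,\qquad x_i(t+1)=x_i(t)\oplus x_{i-1}(t),
\]
all indices being read modulo $n$.

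\textbf{Step 2: induction on $q$.} I would prove the statement $P(q)$: \emph{for every initial configuration $x(0)\in\{0,1\}^n$ and every $i\in V$, $x_i(2^q)=x_{i-2^q}(0)\oplus x_i(0)$}. The base case $q=0$ is exactly the one-step rule of Step~1, since $2^0=1$. Assume $P(q)$. Under the parallel updating mode $x(2^{q+1})=F^{2^q}\big(x(2^q)\big)$, and since $P(q)$ is quantified over all initial configurations, applying it with $x(2^q)$ as initial configuration gives $x_i(2^{q+1})=x_{i-2^q}(2^q)\oplus x_i(2^q)$. Applying $P(q)$ once more to each term on the right, namely $x_i(2^q)=x_{i-2^q}(0)\oplus x_i(0)$ and $x_{i-2^q}(2^q)=x_{i-2^{q+1}}(0)\oplus x_{i-2^q}(0)$, and cancelling the two copies of $x_{i-2^q}(0)$, yields $x_i(2^{q+1})=x_{i-2^{q+1}}(0)\oplus x_i(0)$, i.e., $P(q+1)$.

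\textbf{Main obstacle.} There is essentially none: the argument is a twofold use of the induction hypothesis followed by an $\mathbb{F}_2$-cancellation. Equivalently, writing $F=I+S$ over $\mathbb{F}_2$ with $S$ the cyclic shift $(Sx)_i=x_{i-1}$, the claim is just $(I+S)^{2^q}=I+S^{2^q}$, proved in one line by squaring: $(I+S)^{2^{q+1}}=\big((I+S)^{2^q}\big)^2=(I+S^{2^q})^2=I+S^{2^{q+1}}$. The only things to watch are the direction of the shift and the fact that all automaton indices are taken modulo $n$; one could equally present the whole proof as this single algebraic identity rather than as the induction above.
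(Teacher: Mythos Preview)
Your proof is correct and follows essentially the same route as the paper: induction on $q$ using a double application of the hypothesis and the cancellation $a\oplus a=0$. The only cosmetic differences are that you start the induction at $q=0$ (the one-step rule) whereas the paper starts at $q=1$, and you add the clean algebraic reformulation $(I+S)^{2^q}=I+S^{2^q}$ over $\mathbb{F}_2$, which the paper does not make explicit.
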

\begin{proof}
	Lemma~\ref{lem_local} is proven by induction on $q$. Let $i \in V$ be 
	an arbitrary automaton and let $q$ equal $1$ initially. Clearly, the following holds:
	\begin{equation*}
		x_i(2)\ =\ x_{(i-1)}(1) \oplus x_i(1)\ =\ x_{(i-2)}(0) \oplus x_{(i-1)}(0) \oplus 
		x_{(i-1)}(0) \oplus x_{i}(0)\ =\ x_{(i-2)}(0) \oplus x_{i}(0) \text{.}
	\end{equation*}
	Thus, the basis of the induction holds too. Now, consider that, for $q \in 
	\mathbb{N}$, $x_i(2^q) = x_{(i-2^q)}(0) \oplus x_i(0)$ is true. In the sequel, we 
	pay particular attention to states
	\begin{gather*}
		a = x_i(0)\text{,} \quad b = x_{(i-2^{q-1})}(0)\text{,}\quad c =
		x_{(i-2^q)}(0)\text{,} \quad d = x_i(2^{q-1})\text{,}\quad e = 
		x_{(i-2^{q-1})}(2^{q-1}) \quad \text{and} \quad f =
		x_i(2^{q})\text{.}
	\end{gather*}
	Then, by induction hypothesis, for $q+1$, we have $d = a \oplus b$, $e = b \oplus 
	c$ and $f = d \oplus e$, from which we derive that $f = d \oplus e = (a \oplus b) 
	\oplus (b \oplus c) = a \oplus c$. As a result, we can write:
	\begin{equation*}
		\forall i \in V, \forall q \in \mathcal{N},\ x_i(2^q) = x_i(0)
		\oplus x_{(i-2^q)}(0) \text{,}
	\end{equation*}
	and obtain the expected result.
\end{proof}

\subsubsection{$2$-\textsc{xor} circulant networks of sizes powers of $2$}

In this paragraph, we restrict the study to $2$-\textsc{xor} circulant 
networks of sizes $n = 2^p$, where $p \in \mathbb{N^*}$. Let $x = (x_0, \ldots, 
x_{n-1}) \in \{0,1\}^n$ be a configuration of such a network $N$. $x$ can be viewed 
as the concatenation of two vectors of sizes $\frac{n}{2}$ such that $x = (x',x'')$, 
where $x' = (x_0, \ldots, x_{\frac{n}{2}-1})$ and $x'' = (x_{\frac{n}{2}}, \ldots, 
x_{n-1})$. $x'$ and $x''$ are called the \emph{semi-configurations} of $x$. Let us 
define the \emph{repetition degree} $\delta_r(x)$ of $x$ as:
\begin{equation*}
	\delta_r(x = (x',x'')) = \begin{cases}
		0 & \text{if } x' \neq x'' \text{,}\\
		\delta & \text{if } (x' = x'') \land (\delta_r(x') = \delta - 1)\text{.} 
	\end{cases}
\end{equation*}
$x$ is said to be a \emph{repeated configuration} when $x = (x',x')$. Moreover, 
remark that the time complexity for computing the repetition degree is 
$\mathcal{O}(n)$. Let us now present results about such networks convergence 
times.
\begin{proposition}
	\label{prop_repeated-configuration}
	Let $N$ be a $2$-\textsc{xor} circulant network of size $n = 2^p$, $p \in
	\mathbb{N}^*$, and interaction-step $s$. Configurations $x \in \{0,1\}^n$
	of repetition degree $\delta_r(x) \geq \log_2(n) - 1$ converge towards $(0,
	\ldots, 0)$ in no more than $2$ time steps.
\end{proposition}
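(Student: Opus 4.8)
The plan is to identify the very small set of configurations whose repetition degree is at least $\log_2(n)-1$, and then to follow each of their (extremely short) trajectories explicitly. Write $p=\log_2(n)$. Unrolling the recursive definition of $\delta_r$, a configuration $x$ with $\delta_r(x)=\delta\ge 1$ must be of the form $x=(y,y)$ with $\delta_r(y)=\delta-1$, so iterating this shows that $x$ is the concatenation of $2^{\delta}$ copies of a block of length $n/2^{\delta}$ of repetition degree $0$ (a length-$1$ block being, by convention, of repetition degree $0$). Imposing $\delta_r(x)\ge p-1$ therefore leaves exactly two scenarios: either $\delta_r(x)=p$, in which case the repeated block has length $1$ and $x$ is constant, i.e. $x=(0,\dots,0)$ or $x=(1,\dots,1)$; or $\delta_r(x)=p-1$, in which case the repeated block has length $2$ and repetition degree $0$, i.e. it equals $(0,1)$ or $(1,0)$, so that $x$ is one of the two alternating configurations $a=(0,1,0,1,\dots,0,1)$ and $a'=(1,0,1,0,\dots,1,0)$. (When $p=1$ these four configurations already exhaust $\{0,1\}^2$, which is consistent.) It is thus enough to verify the claim for these four configurations.

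Two of them are immediate: $(0,\dots,0)$ is stable by the second item of Proposition~\ref{prop_basic}, and, since $k=2$ is even, $(1,\dots,1)$ is a predecessor of $(0,\dots,0)$ by the third item of Proposition~\ref{prop_basic}, hence it converges in one step. For $a'$ I would use the fourth item of Proposition~\ref{prop_basic}: $a'$ is the circular permutation of $a$ by one position, so its trajectory is isomorphic to that of $a$, and since $(0,\dots,0)$ is invariant under circular permutations, $a'$ reaches it in the same number of steps as $a$. Only $a$ then remains.

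For $a$, recall that in a $2$-\textsc{xor} circulant network of interaction-step $s$ the two in-neighbours of automaton $i$ are $i-1$ and $i-s$ (indices modulo $n$), so $F(x)_i=x_{i-1}\oplus x_{i-s}$. As $a_j\equiv j\pmod 2$, this yields $F(a)_i\equiv (i-1)+(i-s)\equiv s+1\pmod 2$, independently of $i$. Hence $F(a)=(0,\dots,0)$ when $s$ is odd (convergence in one step), while $F(a)=(1,\dots,1)$ when $s$ is even, and one further step then reaches $(0,\dots,0)$ by the third item of Proposition~\ref{prop_basic} (convergence in two steps). In every case the trajectory hits $(0,\dots,0)$ within two steps, which is the assertion. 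The only point calling for care is the combinatorial characterisation in the first paragraph — making sure that a repetition degree of at least $p-1$ genuinely forces $x$ to be one of these four configurations and nothing else; once that is nailed down, the dynamics reduce to the one-line computation of $F(a)$ together with a parity case split.
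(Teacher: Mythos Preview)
Your proof is correct and follows essentially the same approach as the paper: identify the four configurations of repetition degree at least $p-1$ (the two constants and the two alternating configurations), dispose of the constants via Proposition~\ref{prop_basic}, and handle the alternating configurations by a parity case split on $s$ that sends them to $(1,\dots,1)$ or $(0,\dots,0)$ in one step. The only cosmetic differences are that you justify the enumeration of the four configurations more carefully than the paper (which simply asserts it), and you reduce $a'$ to $a$ via Proposition~\ref{prop_basic}.4 whereas the paper computes both alternating cases in one line; neither changes the substance.
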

\begin{proof}
	First, notice that because $N$ is a $2$-\textsc{xor} circulant network 
	of size $n = 2^p$, $p \in \mathbb{N}^*$, there exist only $4$ repeated
	configurations of degree no smaller than $\log_2(n) - 1$, namely, $(0, 1,
	\ldots, 0, 1)$, its dual $(1, 0, \ldots, 1, 0)$ and $(1, \ldots, 1)$ and its
	dual $(0, \ldots, 0)$. Let us consider the two distinct parities of $s$
	independently. Also, let $t \in \mathcal{T}$ and let $x(t)$ be either
	$(0, 1, \ldots, 0, 1)$ or $(1, 0, \ldots, 1, 0)$. If $s$ is even, then, by hypothesis 
	on $x(t)$, $\forall i \in V,\ x_{(i+s)}(t+1) = x_i(t) \oplus x_{(i+s-1)}(t) = 1$. 
	Otherwise, if $s$ is odd, then, $\forall i \in V,\ x_{(i+s)}(t+1) = x_i(t) \oplus 
	x_{(i+s-1)}(t) = 0$. Now, considering with this Propositions~\ref{prop_basic}.2 
	and~\ref{prop_basic}.3, we get the expected result.
\end{proof}

\noindent Let us now focus on the particular case of $2$-\textsc{xor} circulant 
networks of sizes $n = 2^p$, $p \in \mathbb{N}^*$, and interaction-steps $s = 0$. 
\begin{theorem}
	\label{thm_convergence-lineaire}
	Let $N$ be a $2$-\textsc{xor} circulant network of size $n = 2^p$,  $p \in 
	\mathbb{N}^*$, and interaction-step $0$. The only attractor of $N$ is $(0, \dots, 0)$ 
	and any configuration $x$ converges to it in no more than $n$ time steps.
\end{theorem}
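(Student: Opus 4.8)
The plan is to read the global transition function off directly from the hypothesis on the interaction-step, and then to apply Lemma~\ref{lem_local} at one carefully chosen index. Since $N$ is a $2$-\textsc{xor} circulant network of size $n$, exactly two circulant coefficients are nonzero; property~\textit{(3)} of the definition forces $c_{n-1}=1$, and interaction-step $s=0$ means the second nonzero coefficient is $c_0$. Hence $F(x)_i = x_i \oplus x_{i-1}$ for every $i \in V$ (indices modulo $n$), which is precisely the situation covered by Lemma~\ref{lem_local}. In particular, $(0,\dots,0)$ is a fixed point of $F$, consistent with Proposition~\ref{prop_basic}.2.

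Next I would invoke Lemma~\ref{lem_local} with $q = p$, which is legitimate because $p = \log_2 n \in \mathbb{N}^*$. The point is that $2^p = n \equiv 0 \pmod n$, so the automaton index $i - 2^p$ is just $i$ read modulo $n$. Thus, for every $i \in V$ and every initial configuration $x = x(0)$, the lemma gives $x_i(n) = x_i(2^p) = x_{(i-2^p)}(0) \oplus x_i(0) = x_i(0) \oplus x_i(0) = 0$. Therefore $x(n) = (0,\dots,0)$, i.e.\ every configuration of $N$ reaches $(0,\dots,0)$ in at most $n$ parallel time steps, which is the convergence-time claim.

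It then remains to argue that $(0,\dots,0)$ is the unique attractor. By Proposition~\ref{prop_basic}.2 it is a stable configuration, hence an attractor. Under the parallel updating mode, $\mathcal{G}_\pi$ is the functional graph of $F$, so its attractors are exactly its oriented cycles; since the previous paragraph shows the trajectory of every configuration meets the fixed point $(0,\dots,0)$, no other cycle can exist, and $(0,\dots,0)$ is the only attractor of $N$.

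I do not expect a genuine obstacle here: once Lemma~\ref{lem_local} is available, the whole argument collapses to the observation that the choice $q=\log_2 n$ makes the shift $2^q$ appearing in that lemma vanish modulo $n$. The only points needing a little care are that $q=p$ is a valid instance of the lemma (true since $p\ge 1$), that all automata indices are taken modulo $n$ so that $i-2^p$ and $i$ really coincide, and that $n$ is asserted only as an upper bound on the convergence time, not as a tight value.
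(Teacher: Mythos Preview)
Your proof is correct and follows essentially the same approach as the paper: both apply Lemma~\ref{lem_local} with $q=p$ so that the shift $2^q=n$ vanishes modulo $n$, yielding $x_i(n)=x_i(0)\oplus x_i(0)=0$ for all $i$. You are in fact more explicit than the paper in spelling out why the local rule is $x_i\oplus x_{i-1}$, why $q=p$ is a valid choice, and why no attractor other than $(0,\dots,0)$ can exist, but the core idea is identical.
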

\begin{proof}
	Since $n = 2^p$, by Lemma~\ref{lem_local}, $\forall i \in V,\ x_i(n) = 
	x_i(0) \oplus x_{i+n}(0) = x_i(0) \oplus x_i(0) = 0$. Then, any configuration $x$ 
	converges to the stable configuration $(0, \dots, 0)$ in no more than $n$ time 
	steps.
\end{proof}

\noindent Questioning about the configurations whose convergence time is maximal 
leads us to Lemma~\ref{lem_repeated} and 
Theorem~\ref{thm_convergence-lineaire-impair}.
\begin{lemma}
	\label{lem_repeated}
	Let $N$ and $N'$ be two $2$-\textsc{xor} circulant networks of respective
	sizes $n = 2^{p+1}$ and $n' = 2^p$, $p \in \mathbb{N}^*$, and
	interaction-steps $0$. Let $x'$ be a configuration of size $2^p$ and $x = (x',
	x')$ be a repeated configuration of size $2^{p+1}$. Then, for any $t \in
	\mathcal{T}$, $x(t) = (x'(t), x'(t))$.
\end{lemma}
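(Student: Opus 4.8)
The plan is to prove the statement by induction on the time step $t$, exploiting the local update rule of a $2$-\textsc{xor} circulant network with interaction-step $0$ and the fact that a repeated configuration $x = (x', x')$ of size $2^{p+1}$ is built from two identical copies of a configuration $x'$ of size $2^p$. The base case $t = 0$ is immediate since $x(0) = x = (x', x') = (x'(0), x'(0))$ by hypothesis. For the inductive step, I would assume $x(t) = (x'(t), x'(t))$ and show that applying one update step of $N$ to $x(t)$ produces $(x'(t+1), x'(t+1))$, where $x'(t+1)$ is one update step of $N'$ applied to $x'(t)$.

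The key observation is that, with interaction-step $s = 0$, the local transition function of automaton $i$ in a $2$-\textsc{xor} circulant network depends only on automata $i-1$ and $i$: namely $x_i(t+1) = x_{i-1}(t) \oplus x_i(t)$ (indices modulo the size). So I would compute $x_i(t+1)$ for $i$ in the two halves separately. For $0 \le i \le 2^p - 1$, the relevant indices $i-1$ and $i$ both lie in the first half $\{0, \dots, 2^p - 1\}$ — except for the wraparound at $i = 0$, where $i - 1 \equiv 2^{p+1} - 1$, the last index of the \emph{second} half. By the inductive hypothesis $x(t) = (x'(t), x'(t))$, the state at index $2^{p+1} - 1$ equals the state at index $2^p - 1$ of $x'(t)$, which is exactly the "wraparound neighbour" $x'_{-1}(t) = x'_{2^p - 1}(t)$ seen by automaton $0$ in the smaller network $N'$. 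Hence for all $i$ in the first half, $x_i(t+1) = x'_{i-1}(t) \oplus x'_i(t) = x'_i(t+1)$, where the indices on the right are taken modulo $2^p$. The identical computation for $i$ in the second half, $2^p \le i \le 2^{p+1}-1$, gives $x_i(t+1) = x'_{i - 2^p}(t+1)$ by the same reasoning, the wraparound at $i = 2^p$ now pulling from index $2^{p+1}-1$, which again carries the state $x'_{2^p-1}(t)$. This yields $x(t+1) = (x'(t+1), x'(t+1))$, completing the induction.

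The main subtlety — and the only place where care is needed — is the treatment of the two wraparound boundaries, at $i = 0$ and at $i = 2^p$, since these are exactly the indices where the neighbourhood of an automaton in the large network $N$ crosses between the two semi-configurations. The repeated structure $x(t) = (x'(t), x'(t))$ is precisely what makes these boundary cases behave as if each half were wrapping around within a network of size $2^p$. Everywhere else the claim is a triviality, since the update is purely local and the two halves evolve independently and identically. I would also remark that the inductive hypothesis is self-sustaining: the property of being a repeated configuration is preserved by one step of $F$ under these hypotheses, which is what the lemma asserts and what allows it to be iterated. This lemma then feeds directly into the analysis of maximal convergence times for $2$-\textsc{xor} circulant networks of sizes powers of $2$.
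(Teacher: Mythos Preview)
Your proposal is correct and follows essentially the same approach as the paper: induction on $t$, using the local update rule $x_i(t+1) = x_{i-1}(t) \oplus x_i(t)$ together with the inductive hypothesis $x(t) = (x'(t), x'(t))$. The only cosmetic difference is that the paper first establishes that $x(t+1)$ is repeated via the shift identity $x_i(t+1) = x_{i+2^p}(t+1)$ and then identifies the first half with $x'(t+1)$, whereas you compute each half directly and handle the two wraparound indices $i=0$ and $i=2^p$ explicitly; both arguments amount to the same computation.
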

\begin{proof}
	Let $x$ be an arbitrary repeated configuration of $N$. By induction on 
	$t$, we show that $\forall t \in \mathbb{N},\ x(t) = (x'(t), x'(t))$. Let $G' = (V', 
	A')$ be the interaction graph of $N'$. By hypothesis, the lemma is true for $t = 0$. 
	Now, consider that $x(t) = (x'(t), x'(t))$ for $t \in \mathbb{N}$ ($x(t)$ is a repeated 
	configuration) and that $\forall i \in V,\ x_i(t + 1) = x_{(i-1)}(t) \oplus x_{i}(t)$. 
	Hence we have, for all $i \in V$:
	\begin{equation*}
		x_i(t + 1) = x_{(i-1)}(t) \oplus x_{i}(t) = x_{(i-1+2^p)}(t) \oplus x_{(i+2^p)}(t) = 
		x_{(i+2^{p})}(t+1) \text{.}
	\end{equation*}
	Consequently, $x(t+1)$ is also repeated and verifies, for all $i \in V'$:
	\begin{equation*}
		x_i(t+1) = x_{(i-1)~[n']}(t) \oplus x_{i}(t) = x'_{(i-1)~[n']}(t) \oplus x'_i(t) = 
		x'_i(t+1) \text{.}
	\end{equation*}
	As a result, $x(t+1) = (x'(t+1), x'(t+1))$.
\end{proof}
	
\begin{theorem}
	\label{thm_convergence-lineaire-impair}
	Let $N$ be a $2$-\textsc{xor} circulant network of size $n = 2^p$, $p \in 
	\mathbb{N}^*$, and interaction-step $0$. Any configuration $x$ such that $n \cdot 
	d(x) \equiv 1~[2]$ converges in $n$ time steps exactly.
\end{theorem}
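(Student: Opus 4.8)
The plan is to proceed by induction on $p$. For the base case $p=1$ one has $n=2$, and the only configurations $x$ with $n\cdot d(x)\equiv 1 \pmod 2$ are $(1,0)$ and $(0,1)$; a direct computation gives $(1,0)\to(1,1)\to(0,0)$ and $(0,1)\to(1,1)\to(0,0)$, so convergence takes exactly $2=n$ steps. For the inductive step the key observation is that, for a $2$-\textsc{xor} circulant network of size $n=2^p$ with interaction-step $0$, the configuration $x(n/2)$ is always a \emph{repeated} configuration; this lets Lemma~\ref{lem_repeated} fold the remaining evolution into the $2$-\textsc{xor} circulant network $N'$ of size $n/2=2^{p-1}$ and interaction-step $0$, where the induction hypothesis applies.

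In detail, write $m=n/2=2^{p-1}$ and assume $p\ge 2$. First I would apply Lemma~\ref{lem_local} with $q=p-1$ (so that $2^q=m$): for every $i\in V$ it gives $x_i(m)=x_{(i-m)}(0)\oplus x_i(0)$, and since $i-m\equiv i+m \pmod n$ this says exactly that $x(m)=(z,z)$, where $z=(z_0,\dots,z_{m-1})$ is the size-$m$ vector defined by $z_i=x_i(0)\oplus x_{i+m}(0)$. Next I would check the parity of $z$: using $a\oplus b\equiv a+b \pmod 2$ we get $m\cdot d(z)=\sum_{i<m}(x_i(0)\oplus x_{i+m}(0))\equiv\sum_{i<n}x_i(0)=n\cdot d(x)\equiv 1 \pmod 2$, so $z$ also has odd weight. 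Applying Lemma~\ref{lem_repeated} to the repeated configuration $x(m)=(z,z)$ then yields $x(m+t)=(z(t),z(t))$ for all $t$, where $z(t)$ denotes the trajectory of $z$ in $N'$; in particular $x(n-1)=x(m+(m-1))=(z(m-1),z(m-1))$. By the induction hypothesis applied to $N'$ (legitimate since $m\cdot d(z)\equiv 1\pmod 2$ and $m=2^{p-1}$), the configuration $z$ converges in exactly $m$ steps, hence $z(m-1)\ne(0,\dots,0)$ and therefore $x(n-1)\ne(0,\dots,0)$. Since Theorem~\ref{thm_convergence-lineaire} gives $x(n)=(0,\dots,0)$ and $(0,\dots,0)$ is absorbing (so $x(t)=(0,\dots,0)$ for some $t\le n-1$ would force $x(n-1)=(0,\dots,0)$), the convergence time of $x$ is exactly $n$.

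The only genuinely non-routine step is the identification of $x(n/2)$ as a repeated configuration whose common half $z$ inherits the odd weight of $x$; once this is in hand, Lemma~\ref{lem_repeated}, Theorem~\ref{thm_convergence-lineaire} and the induction hypothesis finish the argument. As a cross-check, one can also argue algebraically over $\mathbb{F}_2$: the transition operator is $I+S$ with $S$ the cyclic shift, $(I+S)^n=I+S^n=0$ because $n=2^p$ and $S^n=I$, while $(I+S)^{n-1}=\sum_{k=0}^{n-1}\binom{n-1}{k}S^k=\sum_{k=0}^{n-1}S^k$ since every $\binom{2^p-1}{k}$ is odd (Lucas' theorem); hence $x(n-1)$ has every coordinate equal to $n\cdot d(x)\bmod 2=1$, so again $x(n-1)\ne(0,\dots,0)$ whereas $x(n)=(0,\dots,0)$.
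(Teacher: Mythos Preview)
Your proof is correct and follows essentially the same route as the paper: induction on $p$, using Lemma~\ref{lem_local} to see that $x(n/2)$ is repeated with an odd-weight half, then Lemma~\ref{lem_repeated} to fold the remaining evolution into the size-$n/2$ network where the induction hypothesis applies. Your presentation is in fact tidier than the paper's (you make the ``$x(n-1)\neq 0$ via absorbingness of $0$'' step explicit), and the closing algebraic cross-check via $(I+S)^{n-1}=\sum_k S^k$ and Lucas' theorem is a genuinely independent one-line alternative that the paper does not give.
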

\begin{proof}
	We proceed by induction on $p$. If $p = 1$, according to 
	Propositions~\ref{prop_basic}.3 and~\ref{prop_repeated-configuration}, 
	configurations of repetition degree $\log_2(n) - 1$ are proven to converge in $2$ 
	time steps. Thus, the basis of the induction holds. Consider the following induction 
	hypothesis: for $p = q$, any configuration $x$ such that $2^q \cdot d(x) \equiv 1~[2]$ 
	converges in $2^q$ time steps. Suppose now that $p = q + 1$ and consider a 
	$2$-\textsc{xor} circulant network $N$ of size $n = 2^{q+1}$ and interaction-step $0$. 
	Let $x$ be a configuration of size $2^{q+1}$ such that $n \cdot d(x) \equiv 1~[2]$. After 
	$2^q$ time steps:
	\begin{itemize}
	\item \emph{$x(2^q)$ is a repeated configuration of the form $x(2^q) =
		(x'(2^q),x'(2^q))$}. Indeed, by Lemma~\ref{lem_local}, $\forall i \in \{0, \ldots,
		2^q-1\},\ x_i(2^q) = x_i(0) \oplus x_{(i+2^{q})}(0)$. Hence, $\forall i \in \{0, 
		\ldots, 2^q-1\},\ x_i(2^q) = x_{(i+2^{q+1})}(0) \oplus x_{(i+2^{q})}(0) = 
		x_{(i+2^{q})}(2^q)$.
	\item \emph{$x'$ has an odd number of $1$s}. By the property above together with 
	Lemma~\ref{lem_local} , since $\forall i \in \{0, \ldots 2^q-1\}, x'_i(2^q) = 
	x_i(2^q) = x_i(0) + x_{(i+2^{q})}(0)$, each automaton of $x(0)$ influences exactly 
	one automaton of $x'$. If $x'_i(2^q) = 0$, then the states of both the automata of 
	$x(0)$ that influence $x'_i(2^q)$ must have the same parity. If $x'_i(2^q) = 1$ then 
	the states of both the automata of $x(0)$ that influence $x'_i(2^q)$ must have 
	different parities. Since there is an odd number of $1$s in $x(0)$, there is an odd 
	number of $1$s in $x'(2^q)$.
	\end{itemize}
	By Lemma~\ref{lem_repeated}, $x(2^q)$ behaves exactly like $x'(2^q)$. Furthermore, 
	by the induction hypothesis, $x'$ converges in exactly $2^q$ time steps. Hence, $x$
	converges in exactly $n = 2^{q+1}$ time steps.
\end{proof}

\section{Conclusion and Perspectives}
\label{sec_conclu}

In this paper, we have highlighted that non-monotony could be at the origin 
of dynamical singularities of Boolean models of genetic regulation networks, with 
respect to their sensitivity against synchronism. This is an interesting property 
because biological experimentations currently give no tangible results about the way
that genes express over time. Moreover, on the basis of this result, we have 
developed a study on the \textsc{xor} circulant networks class and have shown some 
notable results about their convergence times in particular.\medskip

This work opens many research directions that could help develop the
knowledge on the influence of non-monotony in automata networks and, \emph{a
  fortiori}, in real genetic networks. One of these perspectives relies on the
first part of this paper dealing with synchronism sensitivity. It would consist
in understanding how do monotone and non-monotone Boolean automata networks
relate. In~\cite{Noual2011a}, preliminary results are derived on synchronism
sensitivity of monotone networks that emphasise necessary structural conditions
(namely, the presence of specific circuits in the interaction graphs) and
examples of synchronism sensitive monotone networks are given. What is
interesting is that these examples involve linear \emph{monotone codings} of
non-monotony. This naturally raises the question of whether non-monotony
accounts for the synchronism sensitivity in arbitrary monotone and non-monotone
networks.  In addition, further analyses on the behaviours of \textsc{xor}
circulant networks are planned. We would like to obtain generalisations of the
results presented above by following two directions: relaxing structural
constraints step by step and viewing these networks as state transition systems
rather than discrete dynamical systems subjected to the parallel updating mode.

\section{Acknowledgements}
\label{sec_acknowledgments}

We are indebted to the \emph{Agence nationale de la recherche} and the 
\emph{R{\'e}seau national des syst{\`e}mes complexes} that have respectively 
supported this work through the projects Synbiotic (\textsc{anr} 2010 \textsc{blan} 
0307\,01) and M{\'e}t{\'e}ding (\textsc{rnsc} \textsc{ai}10/11-\textsc{l}03908).

%

%
\end{document}